\newtheorem{observation}{Observation}
\newtheorem{claimx}{Claim}
\def\def\IPEfile{#}\input{#}1{\def\IPEfile{#1}\input{#1}}
\newcommand{\remove}[1]{}
\def\B{{\cal B}}
\newcommand{\bcmd}{{\sc bcmd}\xspace}
\newcommand{\spt}{{\sc spt}}
\newcommand{\mhspt}[1]{{\sc mh$_#1$spt}\xspace}
\newcommand{\W}[1][XXX]{{\normalfont W[#1]}}
\newcommand{\NP}{{\normalfont NP}}
\newcommand{\Card}[1]{|#1|}
\newcommand{\Yes}{\textsc{Yes}}
\DeclareMathOperator{\dist}{dist}
\renewenvironment{proof}
{{\bf Proof.}}{\hspace*{\fill}$\Box$\par\vspace{2mm}}
\begin{document}

\title{Augmenting Graphs to Minimize the Diameter}

\author{Fabrizio Frati\inst{1} \and Serge Gaspers\inst{2,3}\and \\Joachim Gudmundsson\inst{1,3} \and Luke Mathieson\inst{4}
\institute{University of Sydney, Australia, \email{brillo@it.usyd.edu.au,joachim.gudmundsson@sydney.edu.au}
 \and The University of New South Wales, Australia, \email{sergeg@cse.unsw.edu.au}
 \and NICTA
, Australia
 \and Macquarie University, Australia, \email{luke.mathieson@mq.edu.au}}}

\maketitle

\begin{abstract}
We study the problem of augmenting a weighted graph by inserting edges of bounded total cost while minimizing the diameter of the augmented graph. Our main result is an FPT $4$-approximation algorithm for the problem.
\end{abstract}

\vspace{0.5mm}
\section{Introduction}

We study the problem of minimizing the diameter of a weighted graph by the insertion of edges of bounded total cost. This problem arises in practical applications~\cite{bgp-ianar-12,dz-mdnus-10} such as telecommunications networks, information networks, flight scheduling, protein interactions, and it has also received considerable attention from the graph theory community, see for example~\cite{agr-ddbdg-99,er-ptg-62,g-ddc-03}.

We introduce some terminology. Let $G=(V,E)$ be an undirected weighted graph. Let $[V]^2$ be the set of all possible edges on the vertex set $V$. A {\em non-edge} of $G$ is an element of  $[V]^2 \setminus E$. The \emph{weight} of a path in $G$ is the sum of its edge weights. For any $u,v\in V$, the {\em shortest $u$-$v$ path} in $G$ is the path connecting $u$ and $v$ in $G$ with minimum weight. The weight of this path is said to be the \emph{distance} between $u$ and $v$ in $G$.
Finally, the \emph{diameter} of $G$ is the largest distance between any two vertices in $G$.
The problem we study in this paper is formally defined as follows.

\vspace{3mm}

\begin{tabular}{l p{9.5cm}}
{\sc Problem:} & Bounded Cost Minimum Diameter Edge Addition (\bcmd)\\
{\sc Input:} & An undirected graph $G=(V,E)$, a weight function $w:[V]^2 \rightarrow \mathbb{N}$, a cost function $c:[V]^2 \rightarrow \mathbb{N}^*$, and an integer $B$.\\
{\sc Goal:} & A set $F$ of non-edges with $\sum_{e\in F} c(e)\leq B$ such that the diameter of the graph $G_B=(V,E\cup F)$ with weight function $w$ is minimized. We say that $G_B$ is a \emph{$B$-augmentation} of~$G$.\\
\end{tabular}

\vspace{3mm}

The main result of this paper is a fixed parameter tractable (FPT) $4$-approxi\-ma\-tion algorithm for \bcmd with parameter $B$. FPT approximation algorithms are surveyed by Marx~\cite{m-pcaa-08}. For background on parameterized complexity we refer to \cite{DowneyF99,FlumG06,Niedermeier06} and for background on approximation algorithms to \cite{v-aa-01}.

Several papers in the literature already dealt with the \bcmd problem. However, most of them focused on restricted versions of the problem, namely the one in which all costs and all weights are identical~\cite{cv-atmcd-02,dk-dnbpd-99,ks-bdmcg-07,lms-mcbdb-92}, and the one in which all the edges have unit costs and the weights of the non-edges are all identical \cite{bgp-ianar-12,dz-mdnus-10}.

The \bcmd problem can be seen as a bicriteria optimization problem where the two optimization criteria are: (1) the cost of the edges added to the graph and (2) the diameter of the augmented graph. As is standard in the literature, we say that an algorithm is an $(\alpha,\beta)$-approximation algorithm for the \bcmd problem, with $\alpha, \beta \geq 1$, if it computes a set $F$ of non-edges of $G$ of total cost at most $\alpha\cdot B$ such that the diameter of $G'=(V,E\cup F)$ is at most $\beta\cdot D^B_{opt}$, where $D^B_{opt}$ is the diameter of an optimal $B$-augmentation of $G$.

We survey some known results about the \bcmd problem. Note that all the algorithms discussed below run in polynomial time.

\paragraph{Unit weights and unit costs.} The restriction of \bcmd to unit costs and unit weights was first shown to be \NP-hard in 1987 by Schoone et al.~\cite{sbl-diced-87}; see also the paper by Li et al.~\cite{lms-mcbdb-92}. Bil\`{o} et al.~\cite{bgp-ianar-12} showed that, as a consequence of the results in~\cite{cv-atmcd-02,dk-dnbpd-99,lms-mcbdb-92}, there exists no $(c \log n, \delta<1+1/D^B_{opt})$-approximation algorithm for \bcmd if $D^B_{opt}\geq 2$, unless P=\NP. For the case in which $D^B_{opt}\geq 6$, they proved a stronger lower bound, namely that there exists no $(c \log n, \delta<\frac {5}{3}-\frac {7-(D^B_{opt}+1) \mod 3} {3D^B_{opt}})$-approximation algorithm, unless P=\NP.

Dodis and Khanna~\cite{dk-dnbpd-99} gave an $(O(\log n),2+ 2/D^B_{opt})$-approximation algorithm (see also~\cite{ks-bdmcg-07}). Li et al.~\cite{lms-mcbdb-92} showed a $(1,4+ 2/D^B_{opt})$-approximation algorithm. The analysis of the latter algorithm was later improved by Bil\`{o} et al.~\cite{bgp-ianar-12}, who showed that it gives a $(1,2+2/D^B_{opt})$-approximation. In the same paper they also gave a $(O(\log n), 1)$-approximation algorithm.

\paragraph{Unit costs and restricted weights.}
Some of the results from the unweighted setting have been extended to a restricted version of the weighted case, namely the one in which the edges of $G$ have arbitrary non-negative integer weights, however all the non-edges of $G$ have cost $1$ and uniform weight $\omega \ge 0$.

Bil\`{o} et al.~\cite{bgp-ianar-12} showed how two of their algorithms can be adapted to this restricted weighted case. In fact, they gave a $(1,2+2\omega/D^B_{opt})$-approximation algorithm and a $(2-1/B,2)$-approximation algorithm.  Similar results were obtained by Demaine and Zadimoghaddam in~\cite{dz-mdnus-10}.

Bil\`{o} et al.~\cite{bgp-ianar-12} also showed that, for every $D^B_{opt}\geq 2\omega$ and for some constant $c$, there is no $(c \log n,\delta<2- 3\omega/D^B_{opt})$-approximation algorithm for this restriction of the \bcmd problem, unless P=\NP.

\paragraph{Arbitrary costs and weights.}
To the best of our knowledge, there is only one theory paper that has considered the general \bcmd problem.
In 1999, Dodis and Khanna~\cite{dk-dnbpd-99} presented an $O(n \log D^B_{opt},1)$-approximation algorithm, assuming that all weights are polynomially bounded. Their result is based on a multicommodity flow formulation of the problem.

\paragraph{Our results.} In this paper we study the \bcmd problem with arbitrary integer costs and weights. Our main result is a $(1,4)$-approximation algorithm with running time $O((3^B B^3 + n + \log (Bn)) B n^2)$. We also prove that, considering $B$ as a parameter, it is $\W[2]$-hard to compute a $(1+c/B, 3/2-\epsilon)$-approximation, for any constants $c$ and $\epsilon>0$. Further, we present polynomial-time $((k+1)^2,3)$-, $(k,4)$-, and $(1,3k+2)$-approximation algorithms for the unit-cost restriction of the \bcmd problem.

\section{Shortest Paths with Bounded Cost}

Let $(G=(V,E),w,c,B)$ be an instance of the \bcmd problem and let $K$ denote the complete graph on the vertex set $V$. The edges of $K$ have the same weights and costs as they have in $G$ (observe that an edge $e$ of $K$ is either an edge or a non-edge of $G$). For technical reasons, we add self-loops with weight $0$ and cost $1$ at each vertex of $K$.

For any $0\leq \beta \leq B$, a path in $K$ is said to be a {\em $\beta$-bounded-cost path} if it uses non-edges of $G$ of total cost at most $\beta$. We consider the problem of computing, for {\em every} integer $0\leq \beta \leq B$ and for every two vertices $u,v \in V$, a $\beta$-bounded-cost shortest path connecting $u$ and $v$, if such a path exists. We call this problem the \emph{All-Pairs $B$-Shortest Paths} (APSP$_B$) problem. We will prove the following.

\begin{theorem} \label{th:apsp}
The APSP$_B$ problem can be solved in $O(B n^3 + Bn^2 \log (Bn))$ time using $O(B n^2)$ space.
\end{theorem}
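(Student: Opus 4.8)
The plan is to reduce the APSP$_B$ problem to $n$ single-source shortest-path computations in an auxiliary ``layered'' graph $H$ whose second coordinate records the non-edge cost spent so far. Concretely, let $V(H)=V\times\{0,1,\dots,B\}$. For every edge $e=xy$ of $K$ that is an edge of $G$ and every $\beta\in\{0,\dots,B\}$, put an edge of $H$ between $(x,\beta)$ and $(y,\beta)$ of weight $w(e)$ (traversing an edge of $G$ spends no budget). For every edge $e=xy$ of $K$ that is a non-edge of $G$ and every $\beta$ with $\beta+c(e)\le B$, put an edge of $H$ between $(x,\beta)$ and $(y,\beta+c(e))$ of weight $w(e)$. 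Finally, for the weight-$0$, cost-$1$ self-loop of $K$ at each vertex $x$ and every $\beta<B$, put an edge of $H$ between $(x,\beta)$ and $(x,\beta+1)$ of weight $0$; these ``padding'' edges let a walk discard leftover budget. Then $H$ has $\Theta(Bn)$ vertices, $\Theta(Bn^2)$ edges, and non-negative integer weights.

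The correctness claim is that $\dist_H((u,0),(v,\beta))$ equals the weight of a $\beta$-bounded-cost shortest $u$-$v$ path in $K$ for all $u,v\in V$ and $0\le\beta\le B$, both being $+\infty$ when no such path exists. For one inequality, given any $\beta'$-bounded-cost $u$-$v$ path $P$ in $K$ with $\beta'\le\beta$, copying its edges into $H$ yields an equal-weight walk from $(u,0)$ to $(v,\beta'')$ with $\beta''\le\beta'\le\beta$, which the weight-$0$ padding edges extend to a walk into $(v,\beta)$; hence $\dist_H((u,0),(v,\beta))$ is at most the minimum weight of such a path. Conversely, projecting a shortest walk from $(u,0)$ to $(v,\beta)$ in $H$ onto $V$ gives a $u$-$v$ walk of the same weight whose total non-edge cost equals the net increase of the second coordinate and is therefore at most $\beta$; deleting its cycles (of non-negative weight) yields a $\beta$-bounded-cost $u$-$v$ path of no greater weight. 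Monotonicity of the answer in $\beta$ is taken care of by the padding edges --- which is exactly why the self-loops were added to $K$.

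Algorithmically, since $H$ has non-negative integer weights, I would run Dijkstra's algorithm with a binary heap from $(u,0)$ for every $u\in V$, storing all distances and each shortest-path tree. The answer to a query $(u,v,\beta)$ is then $\dist_H((u,0),(v,\beta))$, and a witnessing path is obtained by following tree predecessors from $(v,\beta)$, projecting onto $V$, and splicing out any repeated vertex exactly as in the correctness argument. A single Dijkstra run costs $O(|E(H)|+|V(H)|\log|V(H)|)=O(Bn^2+Bn\log(Bn))$, so the $n$ runs cost $O(Bn^3+Bn^2\log(Bn))$ in total; storing $H$ (or generating its adjacency on the fly from $K$), the $n$ distance and predecessor arrays with $\Theta(Bn)$ entries each, and the $\Theta(Bn^2)$ query answers uses $O(Bn^2)$ space.

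The substantive step is the layered-graph reduction together with a careful proof of the distance identity in both directions, in particular checking that the cost-$1$ padding self-loops model ``cost at most $\beta$'' (not ``exactly $\beta$'') and that a shortest bounded-cost walk can be taken to be a simple path. Given that, the time and space bounds are a routine binary-heap Dijkstra accounting.
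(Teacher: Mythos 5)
Your construction and the paper's are essentially the same layered-graph reduction, but there are two genuine errors in your write-up.

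\textbf{Undirected versus directed edges.} You build $H$ with \emph{undirected} edges (``put an edge of $H$ between \ldots''), including the padding edge between $(x,\beta)$ and $(x,\beta+1)$. This breaks the key step of your correctness argument: you claim that the total non-edge cost of the projected walk ``equals the net increase of the second coordinate,'' but with undirected edges a walk can increase the second coordinate by spending non-edge cost and then decrease it for free via a backward padding edge, so the net increase can be strictly smaller than the cost actually incurred. Concretely, take $V=\{u,v\}$, $G$ edgeless, one non-edge $uv$ with $c(uv)=2$ and $w(uv)=1$, and $B=2$. In your $H$ the walk $(u,0)\to(v,2)\to(v,1)\to(v,0)$ has weight $1$, so $\dist_H((u,0),(v,0))=1$, yet there is no $0$-bounded-cost $u$-$v$ path in $K$; your claimed identity fails. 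The paper avoids this by making $H$ a \emph{directed} graph in which every arc goes from a lower budget layer to a (weakly) higher one, so the second coordinate is non-decreasing along any path and the net increase really does equal the total non-edge cost. Your proof becomes correct once you direct all inter-layer edges (non-edge arcs and padding arcs) from layer $i$ to layer $i+c$.

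\textbf{Heap choice.} You say you run Dijkstra ``with a binary heap'' but then quote $O(|E(H)|+|V(H)|\log|V(H)|)$ per run, which is the Fibonacci-heap bound. A binary-heap Dijkstra costs $O((|E(H)|+|V(H)|)\log|V(H)|)=O(Bn^2\log(Bn))$ per source, for a total of $O(Bn^3\log(Bn))$, which is worse than the claimed $O(Bn^3+Bn^2\log(Bn))$. You need Fibonacci heaps (as the paper uses, citing Fredman--Tarjan) to achieve the stated time bound.

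Apart from these two points --- directing the layered graph, and using Fibonacci heaps --- your reduction, correctness argument, and space accounting match the paper's proof.
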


\noindent
In order to prove Theorem~\ref{th:apsp}, we construct a directed graph $H=(U,F)$ as follows. First, consider $G$ as a directed graph, i.e., replace every undirected edge $\{u,v\}$ with two arcs $(u,v)$ and $(v,u)$ with the same weight and cost as the edge $\{u,v\}$. Then, $H=(U,F)$ contains $B+1$ copies of $G$, denoted by $G_0, \ldots, G_B$. For any $0\leq i\leq B$, we denote by $(v,i)$ the copy of vertex $v\in V$ in $G_i=(V_i,E_i)$. The arc set $F$ contains the union of $E'$ and $F'$, where $E' = \bigcup_{0\leq i \leq k} E_i$, and $$F'=\Big \{ \big ( (u, i),(v,i+c(\{u,v\}))\big ) \: : \: 0 \leq i \leq B-c(\{u,v\}), \: \{u,v\} \in [V]^2 \setminus E \Big \}.$$
For each $((u,i),(v,j))\in F'$, the weight and the cost of $((u,i),(v,j))$ are $w(\{u,v\})$ and $c(\{u,v\})=j-i$, respectively.

\begin{observation} \label{obs:vertices-edges}
  The number of vertices in $U$ is $(B+1)n$ and the number of arcs in $F$ is $O(Bn^2)$.
\end{observation}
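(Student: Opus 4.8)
The statement is a direct counting argument; the plan is simply to tally the two components of the construction of $H$.

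For the vertices: $H$ consists of the $B+1$ copies $G_0,\ldots,G_B$ of $G$, which are pairwise vertex-disjoint and each have $n$ vertices, so $|U| = \sum_{i=0}^{B} |V_i| = (B+1)n$.

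For the arcs, I would bound $|E'|$ and $|F'|$ separately and add them, using $F = E' \cup F'$. Each $E_i$ is (a copy of) the arc set obtained from $G$ by replacing every edge with two oppositely directed arcs and adding a zero-weight self-loop at every vertex, so $|E_i| \le 2\binom{n}{2} + n = O(n^2)$, and hence $|E'| = \sum_{i=0}^{B} |E_i| = O(Bn^2)$. For $F'$, each of its arcs is specified by a pair $\{u,v\} \in [V]^2$ together with an index $i$ in the range $0 \le i \le B - c(\{u,v\})$; there are $|[V]^2| = O(n^2)$ such pairs and at most $B+1$ admissible indices per pair, so $|F'| \le (B+1)\,|[V]^2| = O(Bn^2)$. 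Combining the two bounds yields $|F| = O(Bn^2)$.

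There is no genuine obstacle here; the only point that requires a moment's care is to account for the self-loops (and, in the definition of $F'$, for non-edges only), but including them changes neither bound asymptotically.
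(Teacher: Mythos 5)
Your counting is correct and is essentially the only argument available; the paper states this observation without a proof, and your tally of $(B+1)n$ vertices, $O(n^2)$ arcs per copy for $E'$, and $O(Bn^2)$ arcs for $F'$ matches the intended reasoning. One small imprecision: the self-loops are added to $K$ rather than to $G$, so their copies live in $F'$ (not in each $E_i$), but as you already note this shifts nothing asymptotically.
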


\noindent
We will use directed graph $H$ to efficiently compute $\beta$-bounded-cost shortest paths in $K$. This is possible due to the following two lemmata.

\begin{lemma}\label{le:correspondence1}
Suppose that $H$ contains a directed path $P_H$ with weight $W$ connecting vertices $(u,i)$ and $(v,j)$, for some $j\geq i$. Then, there exists a $(j-i)$-bounded-cost path $P_K$ in $K$ with weight $W$ connecting $u$ and $v$.
\end{lemma}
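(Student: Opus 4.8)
The plan is to prove Lemma~\ref{le:correspondence1} by induction on the length (number of arcs) of the directed path $P_H$ in $H$, extracting from it a walk in $K$ of the same total weight whose non-edge cost equals the ``level jump'' $j-i$. First I would set up the projection map $\pi\colon U\to V$ sending $(v,i)\mapsto v$, and observe that every arc of $H$ projects either to an edge of $G$ (when it belongs to some $E_i$, with the level coordinate unchanged) or to a non-edge of $G$ of cost exactly $j'-i'$ (when it belongs to $F'$, with the level coordinate increasing by that amount). Applying $\pi$ arc-by-arc to $P_H$ therefore yields a walk in $K$ from $u=\pi(u,i)$ to $v=\pi(v,j)$ whose total weight is $W$ (since corresponding arcs carry equal weights by construction) and whose total cost on non-edges of $G$ telescopes to $j-i$ (the level only ever increases, along $F'$-arcs, by exactly the cost of the projected non-edge, and stays fixed along $E'$-arcs).

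The one subtlety is that $\pi(P_H)$ is a priori a \emph{walk}, not necessarily a \emph{path}, so to get a path $P_K$ I would shortcut repeated vertices. The key point to check is that shortcutting does not increase the weight and does not increase the non-edge cost: removing a closed sub-walk only deletes arcs, each of non-negative weight and non-negative cost, so both quantities can only drop. But the lemma asserts that $P_K$ has weight \emph{exactly} $W$ and is $(j-i)$-bounded-cost, i.e.\ cost \emph{at most} $j-i$; the ``at most'' is automatic after shortcutting, and for the weight I would instead argue that since $P_H$ was an arbitrary path (not assumed shortest) we may simply take $P_K=\pi(P_H)$ when it happens to be a path, or — cleaner — note that the statement only needs \emph{existence} of some $(j-i)$-bounded-cost walk of weight $W$, and a walk of weight $W$ between $u$ and $v$ always contains a \emph{path} between $u$ and $v$ of weight at most $W$; to hit weight exactly $W$ one observes that $K$ is complete with self-loops of weight $0$, so any path can be padded back up — however the intended reading is almost certainly ``weight at most $W$'', and I would phrase the proof to deliver a path of weight at most $W$ and cost at most $j-i$, which is what the subsequent distance argument needs.

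Concretely, the induction is: if $P_H$ has zero arcs then $u=v$, $i=j$, $W=0$, and the trivial path works. If the last arc of $P_H$ is $((v',i),(v,j))\in E_i$ (so $i=j$) with weight $w(\{v',v\})$, apply the inductive hypothesis to the prefix ending at $(v',i)$ to get a $(j-i)$-bounded-cost path from $u$ to $v'$ of weight $W-w(\{v',v\})$, then append the edge $\{v',v\}$ of $G\subseteq K$ (weight $w(\{v',v\})$, cost $0$ as an edge of $G$), and shortcut if $v$ already appeared. If instead the last arc is $((v',i'),(v,j))\in F'$ with $i'=j-c(\{v',v\})$, apply the inductive hypothesis to the prefix ending at $(v',i')$ to get a $(j-c(\{v',v\})-i)$-bounded-cost path from $u$ to $v'$ of weight $W-w(\{v',v\})$, append the non-edge $\{v',v\}$ of $K$ (weight $w(\{v',v\})$, cost $c(\{v',v\})$), giving total non-edge cost $\le (j-i)-c(\{v',v\})+c(\{v',v\})=j-i$, and shortcut.

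I expect the main obstacle to be purely bookkeeping: making the ``weight exactly $W$ vs.\ at most $W$'' and the ``walk vs.\ path'' distinctions precise without circularity, and correctly tracking that the level coordinate is monotone non-decreasing along $P_H$ so that the telescoping sum of costs is well-defined and equals $j-i$ (in particular that $F'$ only contains forward arcs, which is guaranteed by the constraint $0\le i\le B-c(\{u,v\})$ in the definition of $F'$). No single step is hard; the care is in the statement's exact quantifiers.
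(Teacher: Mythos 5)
Your core argument is exactly the paper's: project $P_H$ arc-by-arc to $K$, noting that intra-copy arcs ($E'$) map to edges of $G$ with equal weight and zero non-edge cost, and inter-copy arcs ($F'$, including the self-loop arcs) map to non-edges of $K$ (or self-loops) whose costs telescope over the level coordinate to at most $j-i$, while weights are preserved. The inductive repackaging is cosmetic; the substance is identical.

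The one place you diverge is the walk-vs-path worry, and here you are more careful than the paper: the paper simply calls the projection $P_K$ a ``path'' and asserts it has weight exactly $W$, without addressing the possibility that the projection revisits a vertex (which indeed happens whenever $P_H$ uses a self-loop arc). Your reading — that the intended meaning is ``walk'' or, equivalently, ``path of weight at most $W$'' — is the right repair, and you correctly observe that the shortcut-to-a-simple-path step only decreases weight and cost, which is all Corollary~\ref{cor:equivalence} and Theorem~\ref{th:apsp} actually need (the argument there is about shortest weights, not about a fixed $W$). One small notational slip: in your case analysis for the last arc in $E_\ell$, you reuse the symbol $i$ both for the starting level of $P_H$ and for the level of the penultimate vertex; the intended level there is $j$, not $i$, so write $((v',j),(v,j))\in E_j$.
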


\begin{proof}
Consider a directed path $P_H$ in $H$ with weight $W$ connecting vertices $(u,i)$ and $(v,j)$, for some $j\geq i$. We define a path $P_K$ in $K$ as follows. Path $P_K$ has the same number of vertices of $P_H$. Also, for any $1\leq j\leq |P_H|$, if the $j$-th vertex of $P_H$ is a vertex $(w,i)$, then the $j$-th vertex of $P_K$ is vertex $w$. Observe that $P_K$ connects vertices $u$ and $v$. We prove that $P_K$ is a $(j-i)$-bounded-cost path with weight $W$. Every edge of $P_H$ connecting vertices $(x,a)$ and $(y,a)$ corresponds to an edge $(x,y)$ of $K$ that is an edge of $G$ with the same weight. Moreover, every edge of $P_H$ connecting vertices $(x,a)$ and $(y,b)$ with $b>a$ either corresponds to an edge $(x,y)$ of $K$ that is a non-edge of $G$ with cost $b-a$ and with the same weight, or it corresponds to a self-loop in $K$ with the same weight (that is $0$). Hence, $P_K$ uses non-edges of $G$ of total cost at most $j-i$ and total weight $W$. Thus, $P_K$ is a $(j-i)$-bounded-cost path with weight $W$ connecting $u$ and $v$, and the lemma follows.
\end{proof}

\begin{lemma} \label{le:correspondence2}
Suppose that there exists a $\beta$-bounded-cost path $P_K$ in $K$ with weight $W$ connecting vertices $u$ and $v$. Then, there exists a directed path $P_H$ in $H$ with weight $W$ connecting vertices $(u,0)$ and $(v,\beta)$.
\end{lemma}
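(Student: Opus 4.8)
The plan is to reverse the construction used in the proof of Lemma~\ref{le:correspondence1}: given a $\beta$-bounded-cost path $P_K$ in $K$ from $u$ to $v$, I will lift it to a directed path $P_H$ in $H$ that starts in copy $G_0$ and tracks, in its second coordinate, the total cost of the non-edges of $G$ used so far. First I would write $P_K = (x_0, x_1, \ldots, x_\ell)$ with $x_0 = u$ and $x_\ell = v$. For $0 \le t \le \ell$, define $i_t$ to be the total cost of the non-edges of $G$ among the first $t$ edges of $P_K$ (so $i_0 = 0$), and set the $t$-th vertex of $P_H$ to be $(x_t, i_t)$. The candidate path is then $P_H = ((x_0,i_0), (x_1,i_1), \ldots, (x_\ell,i_\ell))$, which starts at $(u,0)$.

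Next I would check that each consecutive pair $((x_{t-1}, i_{t-1}), (x_t, i_t))$ is genuinely an arc of $H$, splitting into cases according to the status of the edge $\{x_{t-1}, x_t\}$ of $K$. If $\{x_{t-1}, x_t\}$ is an edge of $G$, then it contributes nothing to the cost count, so $i_t = i_{t-1}$, and $((x_{t-1}, i_{t-1}), (x_t, i_{t-1})) \in E_{i_{t-1}} \subseteq E'$, with weight $w(\{x_{t-1},x_t\})$. If it is a self-loop, the same holds (self-loops have cost $1$ in $K$ but do not count as non-edges of $G$ — here I should double-check the intended accounting, but since self-loops have weight $0$ and the statement only needs the cost of \emph{non-edges} bounded, treating a self-loop as cost $0$ for the lift is consistent with Lemma~\ref{le:correspondence1}); alternatively one may simply assume $P_K$ is a simple path and avoid self-loops altogether. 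If $\{x_{t-1}, x_t\}$ is a non-edge of $G$, then $i_t = i_{t-1} + c(\{x_{t-1},x_t\})$, and since the total non-edge cost along all of $P_K$ is at most $\beta \le B$, we have $i_{t-1} \le B - c(\{x_{t-1},x_t\})$, so the arc $((x_{t-1}, i_{t-1}), (x_t, i_t))$ lies in $F'$ by definition, again with weight $w(\{x_{t-1},x_t\})$. In every case the weight of the arc in $H$ equals the weight of the corresponding edge of $P_K$, so the total weight of $P_H$ equals the total weight of $P_K$, namely $W$.

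Finally I would pin down the endpoint's second coordinate. By construction $i_\ell$ is exactly the total cost of the non-edges of $G$ used by $P_K$. If this quantity equals $\beta$ we are done, $P_H$ ends at $(v,\beta)$. If it is strictly smaller, say $i_\ell = \beta' < \beta$, I would extend $P_H$ by a single arc built from the self-loop at $v$: the arc $((v, \beta'), (v, \beta))$ — more precisely, a short chain of loop-arcs climbing from level $\beta'$ to level $\beta$, each of weight $0$ and cost $1$, which exist because self-loops with cost $1$ were added at every vertex of $K$ and hence give rise to arcs in $F'$ for every level up to $B$. This padding adds $0$ to the weight, so the extended path still has weight $W$ and now connects $(u,0)$ to $(v,\beta)$, establishing the lemma.

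The only subtle point — and the one I would be most careful about — is the bookkeeping around self-loops and whether "uses non-edges of total cost at most $\beta$" is meant to be an equality or an inequality when lifting; the padding-with-loops trick at the end is precisely what reconciles the two, so I expect that to be the crux rather than any real difficulty.
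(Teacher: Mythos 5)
Your proof is correct and follows essentially the same construction as the paper: lift $P_K$ vertex-by-vertex into $H$ while tracking accumulated non-edge cost in the second coordinate, then pad with weight-zero self-loop arcs to reach level $\beta$. The bookkeeping you highlight (self-loops and the $\beta' \le \beta$ padding) is exactly what the paper also addresses, so nothing is missing.
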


\begin{proof}
Consider a path $P_K=\langle v_1, v_2, \ldots , v_m \rangle$ in $K$ with weight $W$. Set $(v_1,0)$ to be the first vertex of $P_H$. Suppose that path $P_H$ has been defined until a vertex $(v_h,j)$, corresponding to vertex $v_h$ of $P_K$, for some $1\leq h <m$. If edge $(v_h,v_{h+1})$ of $P_K$ is an edge of $G$, then let $(v_{h+1},j)$ be the vertex corresponding to $v_{h+1}$. If edge $(v_h,v_{h+1})$ of $P_K$ is a non-edge of $G$, then let $(v_{h+1},j+c(\{v_h,v_{h+1}\}))$ be the vertex corresponding to $v_{h+1}$. This defines path $P_H$ up to a vertex $(v,\beta')$. Assuming that $\beta'\leq \beta$, path $P_H$ terminates with a set of edges with weight $0$ connecting $(v,j)$ and, $(v,j+1)$, for every $\beta'\leq j\leq \beta-1$; these edges exist by construction. It remains to prove that $\beta'\leq \beta$ and that $P_H$ has weight $W$. Every edge $(x,y)$ of $P_K$ that is an edge of $G$ corresponds to an edge of $H$ connecting vertices $(x,a)$ and $(y,a)$ with the same weight. Moreover, every edge $(x,y)$ of $P_K$ that is a non-edge of $G$ corresponds to an edge of $H$ connecting vertices $(x,a)$ and $(y,b)$, with $c\{x,y\}=b-a$ and with the same weight. By definition, $P_K$ uses non-edges of $G$ of total cost at most $\beta$. Hence, $\beta'\leq \beta$; also, $P_H$ has weight exactly $W$ and the lemma follows.
\end{proof}

\noindent
We have the following.

\begin{corollary} \label{cor:equivalence}
There is a $\beta$-bounded-cost path connecting vertices $u$ and $v$ in $K$ with weight $W$ if and only if there is a directed path in $H$ connecting vertices $(u,0)$ and $(v,\beta)$ with weight $W$.
\end{corollary}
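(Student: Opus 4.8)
The plan is to obtain the corollary as an immediate specialization of Lemmas~\ref{le:correspondence1} and~\ref{le:correspondence2}, instantiated at the endpoints $(u,0)$ and $(v,\beta)$. No new combinatorial work is needed; the two lemmas already carry all the content, and the corollary just glues them together at a fixed pair of layers of $H$.

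For the forward implication, suppose $K$ contains a $\beta$-bounded-cost path of weight $W$ between $u$ and $v$. I would invoke Lemma~\ref{le:correspondence2} directly: it produces a directed path in $H$ of weight $W$ from $(u,0)$ to $(v,\beta)$, which is exactly the claim. The only thing to observe is that, since $0\le\beta\le B$, the vertex $(v,\beta)$ genuinely exists in $U$, so there is no boundary issue. For the reverse implication, suppose $H$ contains a directed path of weight $W$ from $(u,0)$ to $(v,\beta)$. I would apply Lemma~\ref{le:correspondence1} with $i=0$ and $j=\beta$, which is legitimate because $\beta\ge 0=i$; it yields a $(\beta-0)$-bounded-cost path of weight $W$ in $K$ between $u$ and $v$. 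Since a $(j-i)$-bounded-cost path with $j-i=\beta$ is by definition precisely a $\beta$-bounded-cost path, this is what we wanted.

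Honestly, there is no real obstacle here: the statement is a pure corollary, and both directions close in one line once the right instantiation is chosen. The only points worth spelling out are the trivial ones just mentioned — that $\beta\le B$ ensures $(v,\beta)\in U$, and that the phrase ``$(j-i)$-bounded-cost'' with $j-i=\beta$ unfolds to ``$\beta$-bounded-cost'' — after which the equivalence follows.
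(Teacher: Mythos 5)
Your proof is correct and takes exactly the paper's route: the paper likewise obtains the corollary by citing Lemma~\ref{le:correspondence2} for one direction and Lemma~\ref{le:correspondence1} (with $i=0$, $j=\beta$) for the other. The small clarifications you add about $(v,\beta)\in U$ and unfolding ``$(j-i)$-bounded-cost'' are harmless and sound.
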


\begin{proof}
The necessity follows from Lemma~\ref{le:correspondence2}. The sufficiency follows from Lemma~\ref{le:correspondence1}.
\end{proof}

\noindent
We are now ready to prove Theorem~\ref{th:apsp}. Consider any vertex $u$ in $K$. We first mark every vertex that can be reached from $(u,0)$ in $H$ with the weight of its shortest path from $(u,0)$. By Observation~\ref{obs:vertices-edges}, $H$ has $O(Bn)$ vertices and $O(Bn^2)$ edges, hence this can be done in $O(B n^2 + Bn \log (Bn))$ time~\cite{ft-fhuin-87}. For every $0\leq \beta \leq B$ and for every vertex $v\neq u$, by Corollary~\ref{cor:equivalence} the weight of a $\beta$-bounded cost shortest path in $K$ is the same as the weight of a shortest directed path from $(u,0)$ to $(v,\beta)$ in $H$. Hence, for every $0\leq \beta \leq B$ and for every vertex $v\neq u$, we can determine in total $O(B n^2 + Bn \log (Bn))$ time the weight of a $\beta$-bounded cost shortest path in $K$ connecting $u$ and $v$. Thus, for every $0\leq \beta \leq B$ and for every pair of vertices $u$ and $v$ in $K$, we can determine in total $O(B n^3 + Bn^2 \log (Bn))$ time the weight of a $\beta$-bounded cost shortest path in $K$ connecting $u$ and $v$. This concludes the proof of Theorem~\ref{th:apsp}.

\section{Arbitrary Costs and Weights}

Our algorithms, as many afore-mentioned approximation algorithms for the \bcmd problem, use a clustering approach as a first phase to find a set $C$ of $B+1$ cluster centers. The idea of the algorithm is to create a minimum height rooted tree $T=(U\subseteq V,D)$, so that $C \subseteq U$, by adding a set of edges of total cost at most $B$ to $G$. We will prove that such a tree approximates an optimal $B$-augmentation.

\subsection{Clustering} \label{ssec:clustering}

We start by defining the clustering approach used to generate the $B+1$ cluster centers. Whereas a costly binary search is used in \cite{dz-mdnus-10} to guess the radius of the clusters, we adapt the approach of \cite{bgp-ianar-12} to our more general setting.

For two vertices $u,v$, we denote by $\dist_G(u,v)$ the distance between $u$ and $v$ in $G$.
For a vertex $u$ and a set of vertices $S$, we denote by $\dist_G(u,S)$ the minimum distance between $u$ and any vertex from $S$ in $G$,
i.e., $\dist_G(u,S) = \min_{v\in S} \{\dist_G(u,v)\}$.
For a set of vertices $S$, we denote by $\dist_G(S)$ the minimum distance between any two distinct vertices from $S$ in $G$,
i.e., $\dist_G(S) = \min_{u\in S} \{\dist_G(u,S\setminus \{u\})\}$.

The clustering phase computes a set $C = \{c_1, \dots, c_{B+1}\}$ of $B+1$ cluster centers as follows. Vertex $c_1$ is an arbitrary vertex in $V$;
for $2\leq i \leq B+1$, vertex $c_i$ is chosen so that $\dist_G(c_i,\{c_1,\dots,c_{i-1}\})$ is maximized. Ties are broken arbitrarily.

\begin{lemma} \label{le:clustering}
 The clustering phase computes in $O(Bn^2)$ time a set $C\subseteq V$ of size $B+1$ such that $\dist_G(v,C) \le D^B_{opt}$ for every vertex $v\in V$.
\end{lemma}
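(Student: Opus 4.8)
The statement has two parts: a running-time bound of $O(Bn^2)$ and the covering guarantee $\dist_G(v,C) \le D^B_{opt}$ for all $v \in V$. The running time is the easy part, so I would dispose of it first, then concentrate on the covering guarantee, which is the real content.

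\textbf{Running time.} The plan is to first compute all pairwise distances $\dist_G(u,v)$ in $G$, e.g.\ via $n$ runs of Dijkstra's algorithm, which fits in $O(n \cdot (n^2)) = O(n^3)$ time with a simple implementation, or faster with Fibonacci heaps — but the claimed bound is $O(Bn^2)$, so I suspect the intended accounting is: the distances from the $O(B)$ chosen centers to all vertices suffice, costing $O(B)$ single-source shortest-path computations at $O(n^2)$ each, hence $O(Bn^2)$. Concretely: maintain for each vertex $v$ the quantity $d(v) = \dist_G(v, \{c_1,\dots,c_{i-1}\})$; when a new center $c_i$ is added, run one SSSP from $c_i$ in $O(n^2)$ time and update $d(v) \leftarrow \min\{d(v), \dist_G(v,c_i)\}$; selecting the maximizer of $d$ costs $O(n)$. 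Over $B+1$ iterations this is $O(Bn^2)$, and $|C| = B+1$ by construction. That handles the first part.

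\textbf{Covering guarantee — the main step.} This is the crux, and the natural approach is by contradiction. Suppose some vertex $v$ has $\dist_G(v,C) > D^B_{opt}$. By the greedy choice, at the last step vertex $c_{B+1}$ maximized the distance to the previously chosen centers, so $\dist_G(c_{B+1}, \{c_1,\dots,c_B\}) \ge \dist_G(v, \{c_1,\dots,c_B\}) \ge \dist_G(v, C) > D^B_{opt}$. More strongly, the greedy farthest-point procedure guarantees that \emph{all} pairwise distances among the $B+1$ centers exceed $D^B_{opt}$: for any $i < j$, at the time $c_j$ was picked it was at distance $\ge \dist_G(c_{B+1},\{c_1,\dots,c_B\}) > D^B_{opt}$ from $\{c_1,\dots,c_{j-1}\} \ni c_i$ (using that the sequence of "maximin" values is non-increasing). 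Hence $C$ is a set of $B+1$ vertices that are pairwise at distance $> D^B_{opt}$ in $G$. Now bring in the optimal augmentation: let $G_B^* = (V, E \cup F^*)$ be an optimal $B$-augmentation, so $|F^*| \le B$ in cardinality (since every cost is a positive integer, $\sum_{e\in F^*} c(e) \le B$ forces $|F^*| \le B$) and $\diam(G_B^*) = D^B_{opt}$. Since any two centers are within distance $D^B_{opt}$ in $G_B^*$ but at distance $> D^B_{opt}$ in $G$, every shortest path between two centers in $G_B^*$ must use at least one edge of $F^*$. Consider the $B+1$ centers and, say, the shortest paths in $G_B^*$ from $c_1$ to each of $c_2,\dots,c_{B+1}$: each such path uses an edge of $F^*$; but I want to conclude something about all $\binom{B+1}{2}$ pairs. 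The cleanest finish is a counting/pigeonhole argument: think of contracting each edge of $F^*$; this cannot be quite right directly. Instead, the standard trick is: delete $F^*$ from $G_B^*$ to get back $G$; since $|F^*|\le B$, removing these $\le B$ edges from $G_B^*$ can "disconnect" the $D^B_{opt}$-ball structure — more precisely, define an auxiliary graph on the $B+1$ centers with an edge between two centers iff their $G_B^*$-distance is $\le D^B_{opt}$ (which is all pairs, so this is the complete graph $K_{B+1}$), and charge each such edge to an $F^*$-edge on a shortest path realizing it; I then need that some $F^*$-edge is charged by a set of center-pairs forming a connected subgraph spanning all $B+1$ centers — but that requires more care.

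\textbf{Where the difficulty lies.} I expect the main obstacle to be exactly this last pigeonhole step: turning "$B+1$ centers pairwise far in $G$, pairwise close in $G_B^*$, and $|F^*| \le B$" into a contradiction. The slick argument I would aim for: build a graph $\Gamma$ on vertex set $C$ where $c_i \sim c_j$ if the shortest $c_i$–$c_j$ path in $G_B^*$ uses \emph{no} edge of $F^*$ — equivalently $\dist_G(c_i,c_j) \le D^B_{opt}$, which never happens, so $\Gamma$ has no edges; alternatively, associate to each center $c_i$ (for $i \ge 2$) one specific edge $f_i \in F^*$ lying on its shortest path to $c_1$ in $G_B^*$. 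That gives a map from $\{c_2,\dots,c_{B+1}\}$ ($B$ elements) to $F^*$ ($\le B$ elements) — not yet a contradiction. The resolution must use that if $f_i = f_j = \{x,y\}$, then both $c_i$–$c_1$ and $c_j$–$c_1$ shortest paths pass through $\{x,y\}$, which via the triangle inequality in $G_B^*$ bounds $\dist_{G_B^*}$ restricted cleverly — but actually the right statement, used in \cite{bgp-ianar-12}, is likely that one should look at shortest paths of weight $\le D^B_{opt}$ and argue that two centers sharing the "last new edge before reaching a common region" would themselves be within $D^B_{opt}$ in $G$. I would structure the proof to pin down precisely this: each center $c_i$, $i\ge 2$, is at $G_B^*$-distance $\le D^B_{opt}$ from $c_1$, hence (unpacking) from some endpoint of some $f_i \in F^*$ using a path within $G$ — and if two centers route through the same $f_i$, the $G$-distance between them is $\le D^B_{opt}$, contradiction; so the map $i \mapsto f_i$ is injective, giving $B$ distinct edges in $F^*$, leaving no $F^*$-edge available on the $c_1$-side, and then one more center argument closes it. The careful bookkeeping of which sub-path lies in $G$ versus uses $F^*$-edges is the part I would spend the most effort on; everything else is routine.
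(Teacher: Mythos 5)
Your running-time argument and the reduction to ``$B+2$ pairwise far vertices cannot all become pairwise close after adding edges of total cost $\le B$'' are both on target: you correctly observe that positive integer costs bound $|F^*| \le B$, and that the greedy maximin sequence is non-increasing so all of $C$ (and the uncovered vertex $v$) are pairwise at distance $>D^B_{opt}$ in $G$. But the closing pigeonhole step — which you explicitly flag as the hard part — is where the proposal breaks down, and the specific injectivity claim you aim for is not true. If $c_i$ and $c_j$ both route to $c_1$ in $G_B^*$ through the same new edge $\{x,y\}$, you only get bounds of the form $\dist_G(c_i,y)\le D^B_{opt}-\cdots$ and $\dist_G(c_j,y)\le D^B_{opt}-\cdots$ \emph{separately}; summing them gives $\dist_G(c_i,c_j)\le 2D^B_{opt}$, not $\le D^B_{opt}$, so no contradiction follows. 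Worse, the subpaths from $c_i$ and $c_j$ to the shared edge may themselves use other $F^*$-edges, so you cannot even conclude they are close to $y$ in $G$.

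The paper's argument uses a different and more delicate claim: adding a \emph{single} new edge to a graph in which a vertex set $C'$ is pairwise farther than $D$ allows at most one vertex of $C'$ to be removed to restore pairwise-farness. Equivalently, the graph of ``newly close'' pairs in $C'$ after one edge insertion has a vertex cover of size one, i.e.\ is a star. The proof of that claim rules out (i) two vertex-disjoint close pairs and (ii) a close triangle, in both cases by forcing all the relevant shortest paths through the new edge $(u,v)$, orienting the traversals consistently, and then summing two ``close'' inequalities against two ``far'' inequalities to get $K\le 2D$ and $K>2D$ for the same quantity $K$. Iterating this claim $|F^*|\le B$ times reduces a set of $B+2$ far vertices to a set of at least $2$ vertices still pairwise far in the augmented graph, contradicting $\diam(G_B^*)=D^B_{opt}$. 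That one-edge-at-a-time reduction is the missing ingredient; without it, the direct map from centers to $F^*$-edges does not close the argument.
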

\begin{proof}
First, note that the above described algorithm can easily be implemented in $O(Bn^2)$ time using $B$ iterations
of Dijkstra's algorithm with Fibonacci heaps \cite{ft-fhuin-87}.
Let $c_{B+2}$ denote a vertex maximizing $\dist_G(c_{B+2},C)$, and denote
 this distance by $R$. By definition, $\dist_G(v,C) \le R$ for every $v\in V$.
 To prove the lemma it remains to show that $R\le D^B_{opt}$. For the sake of contradiction, assume $D^B_{opt} < R$.
 Then, $C \cup \{c_{B+2}\}$ is a set of $B+2$ vertices with pairwise distance larger than $D^B_{opt}$ in $G$. We prove the following claim.

\begin{claimx}\label{cla:manydist}
Let $G'$ be a weighted graph and let $C'$ be a set of vertices in $G'$ such that $\dist_{G'}(C') > D$. Then, for every graph $G''$ obtained from $G'$ by adding a single non-edge of $G'$ with non-negative weight, there is a set $C'' \subset C'$ with $|C''|=|C'|-1$ and with $\dist_{G''}(C'') > D$.
\end{claimx}

\begin{proof}
Let $(u,v)$ denote the edge that is added to $G'$ to obtain $G''$.  For the sake of contradiction, assume that there is no vertex $w\in C'$ such that $\dist_{G''}(C' \setminus \{w\}) > D$. That is, every set $C'' \subset C'$ with $|C''|=|C'|-1$ contains two vertices whose distance is at most $D$. Then, there are four vertices $w_1,w_2,w_3,w_4\in C$ such that $\dist_{G''}(w_1,w_2)\leq D$ and $\dist_{G''}(w_3,w_4)\leq D$, or there are three vertices $w_1,w_2,w_3\in C$ such that $\dist_{G''}(w_1,w_2)\leq D$, $\dist_{G''}(w_1,w_3)\leq D$, and $\dist_{G''}(w_2,w_3)\leq D$.

In the first case, since $\dist_{G''}(w_1,w_2) < \dist_{G'}(w_1,w_2)$ and $\dist_{G''}(w_3,w_4) < \dist_{G'}(w_3,w_4)$, we have that $(u,v)$ is an edge of any shortest path $P_{1,2}$ from $w_1$ to $w_2$ and of any shortest path $P_{3,4}$ from $w_3$ to $w_4$. Assume, without loss of generality, that $u$ is encountered before $v$ when traversing $P_{1,2}$ starting at $w_1$ and when traversing $P_{3,4}$ starting at $w_3$ (otherwise swap $w_1$ and $w_2$ and/or $w_3$ and $w_4$). Therefore, we get (1A) $\dist_{G'}(w_1,u)+\dist_{G'}(v,w_2) \le D$, and (1B) $\dist_{G'}(w_3,u)+\dist_{G'}(v,w_4) \le D$. However, since $\dist_{G'}(C')>D$, we have (1C) $\dist_{G'}(w_1,u)+\dist_{G'}(u,w_3) > D$, and (1D) $\dist_{G'}(w_2,v)+\dist_{G'}(v,w_4)> D$. Denote $K := \dist_{G'}(w_1,u) + \dist_{G'}(v,w_2) + \dist_{G'}(w_3,u) + \dist_{G'}(v,w_4)$. Inequalities (1A) and (1B) give $K \le 2D$, while inequalities (1C) and (1D) give $K > 2D$, a contradiction.

In the second case, denote by $P_{1,2}$, $P_{1,3}$, and $P_{2,3}$ three paths in $G''$ with weight at most $D$ connecting $w_1$ and $w_2$, connecting $w_1$ and $w_3$, and connecting $w_2$ and $w_3$, respectively. Since $\dist_{G'}(\{w_1,w_2,w_3\})>D$, all these paths use edge $(u,v)$. Without loss of generality, assume $\dist_{G'}(w_1,u) \le \dist_{G'}(w_1,v)$. Hence, both $P_{1,2}$ and $P_{1,3}$ reach $u$ before $v$ when traversing such paths starting at $w_1$.  Without loss of generality, assume that $P_{2,3}$ reaches $u$ before $v$ when traversing such path starting at $w_2$ (otherwise, swap $w_2$ and $w_3$). Therefore, we get (2A) $\dist_{G'}(w_1,u)+\dist_{G'}(v,w_2) \le D$, and (2B) $\dist_{G'}(w_2,u)+\dist_{G'}(v,w_3) \le D$. However, since $\dist_{G'}(\{w_1,w_2,w_3\})>D$, we have (2C) $\dist_{G'}(w_2,v)+\dist_{G'}(v,w_3) > D$, and (2D) $\dist_{G'}(w_1,u)+\dist_{G'}(u,w_2)> D$. Denote $L := \dist_{G'}(w_1,u) + \dist_{G'}(v,w_2) + \dist_{G'}(w_2,u) + \dist_{G'}(v,w_3)$. Inequalities (2A) and (2B) give $L > 2D$, while inequalities (2C) and (2D) give $L \le 2D$, a contradiction.
This concludes the proof of the claim.
\end{proof}

 \noindent
 Now, since $C \cup \{c_{B+2}\}$ is a set of $B+2$ vertices with pairwise distance larger than $D^B_{opt}$ in $G$, by iteratively using the claim we have that in any $B$-augmentation $G_B$ of $G$, we have a set of $B+2-|F|\ge 2$ vertices with pairwise distance greater than $D^B_{opt}$, thus contradicting the definition of $D^B_{opt}$. This concludes the proof of the lemma.
\end{proof}

\subsection{A minimum height tree}

Let $C$ be a set of $B+1$ cluster centers such that the $B+1$ clusters with centers at $C=\{ c_0, \ldots , c_B\}$ and radius $D^B_{opt}$ cover the vertices of $G$. This set can be computed as described in the previous section.

\begin{definition} \label{def:SPT}
Let $G=(V,E)$ be a graph together with a weight function $w:[V]^2 \rightarrow \mathbb{N}$. Let $C\subseteq V$ and let $u$ be a vertex in $V$. A Shortest Path Tree of $G$, $C$, and $u$, denoted by \spt$(G,C,u)$, is a tree $T$ rooted at $u$, spanning $C$, whose vertices and edges belong to $V$ and $E$, respectively, and such that, for every vertex $c\in C$, it holds $d_T(u,c)=d_G(u,c)$.
\end{definition}

\noindent
The {\em height} of a weighted rooted tree $T$, which is denoted by $\hbar(T)$, is the maximum weight of a path from the root to a leaf.

\begin{definition}
Let $G=(V,E)$ be a graph together with a weight function $w:[V]^2 \rightarrow \mathbb{N}$ and a cost function $c:[V]^2 \rightarrow \mathbb{N}^*$. Let $C\subseteq V$, let $u$ be a vertex in $V$, and let $B\geq 0$ be an integer. A Minimum Height$_B$ SPT of $G$, $C$, and $u$, denoted by \mhspt{B}$(G,c,u)$, is a \spt$(G_B,C,u)$ of minimum height over all $B$-augmentations $G_B$ of~$G$.
\end{definition}

\noindent
Let $G_B$ be a $B$-augmentation of $G$ with diameter $D^B_{opt}$.

\begin{lemma} \label{le:height-optimal}
 The height of a \mhspt{B}$(G,C,u)$ is at most $D^B_{opt}$.
\end{lemma}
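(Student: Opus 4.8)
The plan is to exhibit, inside an optimal $B$-augmentation $G_B$, an actual shortest path tree from $u$ spanning $C$ whose height is bounded by the diameter of $G_B$, and then observe that the minimum-height SPT can only do better. Since $G_B$ is obtained from $G$ by adding non-edges of total cost at most $B$, it is itself a $B$-augmentation, so any \spt$(G_B,C,u)$ is a legitimate candidate in the minimization defining \mhspt{B}$(G,C,u)$; hence it suffices to produce one such tree of height at most $D^B_{opt}$.

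The key steps, in order, are as follows. First I would take the graph $G_B$ with its weight function $w$ and form a shortest path tree $T$ rooted at $u$ that spans all of $C$ (for instance, run Dijkstra from $u$ in $G_B$ and restrict the resulting shortest-path tree to the union of the $u$-to-$c$ shortest paths over $c \in C$, pruning any branch that contains no vertex of $C$). By construction $T$ is a \spt$(G_B,C,u)$: it is rooted at $u$, spans $C$, uses only vertices and edges of $G_B$, and satisfies $d_T(u,c)=d_{G_B}(u,c)$ for every $c\in C$. Second, I would bound $\hbar(T)$. After pruning, every leaf of $T$ is a vertex of $C$ (any leaf not in $C$ would be a dead end contributing nothing and could be removed), so the height of $T$ is $\max_{c\in C} d_T(u,c) = \max_{c\in C} d_{G_B}(u,c)$. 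Third, since $u\in V$ and each $c\in C\subseteq V$ are vertices of $G_B$, each distance $d_{G_B}(u,c)$ is at most the diameter of $G_B$, which is $D^B_{opt}$. Therefore $\hbar(T)\le D^B_{opt}$. Finally, because \mhspt{B}$(G,C,u)$ has, by definition, minimum height over all \spt$(G_{B'},C,u)$ taken over all $B$-augmentations $G_{B'}$ of $G$, and $T$ is one such SPT, we conclude that the height of \mhspt{B}$(G,C,u)$ is at most $\hbar(T)\le D^B_{opt}$.

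I do not expect a serious obstacle here; the statement is essentially a definitional observation. The only point requiring a little care is the pruning argument that makes every leaf of the shortest-path tree a cluster center, so that the root-to-leaf height equals $\max_{c\in C} d_{G_B}(u,c)$ rather than something larger coming from an irrelevant branch — but this is routine, since deleting a leaf outside $C$ preserves the SPT property and the spanning-of-$C$ property, and can be repeated until no such leaf remains. One should also note that $u$ itself is allowed to be in $C$ or not; in either case the bound $d_{G_B}(u,c)\le D^B_{opt}$ holds trivially (it is $0$ when $c=u$).
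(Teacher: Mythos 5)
Your proof is correct and follows essentially the same route as the paper's: exhibit an optimal $B$-augmentation $G_B$, observe that any \spt$(G_B,C,u)$ has height at most the diameter $D^B_{opt}$ of $G_B$ (since its root-to-leaf paths terminate at vertices of $C$), and then invoke the minimality in the definition of \mhspt{B}$(G,C,u)$. You simply spell out the pruning argument that the paper leaves implicit.
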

\begin{proof}
By definition, we have (A) $\hbar(\mbox{\mhspt{B}}(G,C,u)) \leq \hbar(\mbox{\spt}(G_B,C,u))$. Since $G_B$ is a $B$-augmentation of $G$ with diameter $D^B_{opt}$, we have (B) $\hbar(\mbox{\spt}(G_B,C,u)) \leq D^B_{opt}$. Inequalities (A) and (B) together prove the lemma.
\end{proof}

\noindent
We now present a relationship between the \bcmd problem and the problem of computing a \mhspt{B}$(G,C,u)$.

\begin{lemma} \label{lem:4-apx}
Let $G'_B$ be a $B$-augmentation of $G$ such that it holds $\hbar(\mbox{\spt}(G'_B,C,u)) = \hbar(\mbox{\mhspt{B}}(G,C,u))$, for any $u\in V$. Then, the diameter of $G'_B$ is at most $4 \cdot D^B_{opt}$.
\end{lemma}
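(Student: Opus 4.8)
The plan is to bound the diameter of $G'_B$ by relating distances in $G'_B$ to the height of the tree $T := \spt(G'_B,C,u)$ and then to $D^B_{opt}$ via Lemma~\ref{le:height-optimal}, using the clustering property from Lemma~\ref{le:clustering}. First I would fix the root $u$; a natural choice is $u = c_0$ (or any cluster center), so that $T$ spans all of $C$ and, by Lemma~\ref{le:height-optimal} combined with the hypothesis, $\hbar(T) = \hbar(\mhspt{B}(G,C,u)) \le D^B_{opt}$. In particular, every cluster center $c \in C$ satisfies $d_{G'_B}(u,c) \le d_T(u,c) \le \hbar(T) \le D^B_{opt}$.

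Next I would take two arbitrary vertices $x, y \in V$ and bound $d_{G'_B}(x,y)$. By Lemma~\ref{le:clustering} applied to $G$ (and since $G'_B$ only adds edges, distances do not increase), there are cluster centers $c_x, c_y \in C$ with $\dist_{G'_B}(x, c_x) \le \dist_G(x,C) \le D^B_{opt}$ and likewise $\dist_{G'_B}(y, c_y) \le D^B_{opt}$. Now concatenate a shortest $x$--$c_x$ path in $G'_B$, the root-to-$c_x$ and root-to-$c_y$ paths in $T$ (reversing the first), and a shortest $c_y$--$y$ path in $G'_B$. By the triangle inequality this yields
\[
d_{G'_B}(x,y) \;\le\; \dist_{G'_B}(x,c_x) + d_{G'_B}(c_x,u) + d_{G'_B}(u,c_y) + \dist_{G'_B}(c_y,y) \;\le\; 4\,D^B_{opt}.
\]
Taking the maximum over all pairs $x,y$ gives that the diameter of $G'_B$ is at most $4\,D^B_{opt}$, as claimed.

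One point I would be careful about: the hypothesis states the height equality holds ``for any $u \in V$,'' but the construction of a single graph $G'_B$ realizing all of these simultaneously is presumably what the surrounding algorithm delivers; for this lemma I only need one fixed choice of $u$, so I would state explicitly that I instantiate the hypothesis at $u = c_0$ and work with that tree throughout. The only genuine content is the four-term path decomposition and the observation that adding edges never increases distances (so the clustering bound from Lemma~\ref{le:clustering}, proved for $G$, transfers to $G'_B$); everything else is the triangle inequality. The main obstacle, such as it is, is making sure the two ``radius'' bounds and the two ``tree height'' bounds are each genuinely at most $D^B_{opt}$ — the former from Lemma~\ref{le:clustering}, the latter from Lemma~\ref{le:height-optimal} — and that nothing forces a worse constant; no step here looks delicate.
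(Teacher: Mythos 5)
Your proof is correct and follows the paper's argument exactly: both decompose $\dist_{G'_B}(x,y)$ as $\dist(x,c_x) + \dist(c_x,u) + \dist(u,c_y) + \dist(c_y,y)$, bounding the outer two terms by $D^B_{opt}$ via Lemma~\ref{le:clustering} and the inner two by $D^B_{opt}$ via Lemma~\ref{le:height-optimal} through the SPT height. Your clarification that only a single fixed $u$ is needed (and that distances can only decrease when passing from $G$ to $G'_B$) is a reasonable reading of the lemma statement and matches what the paper's proof implicitly does.
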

\begin{figure}[tb]
  \begin{center}
     \includegraphics[width=5.5cm]{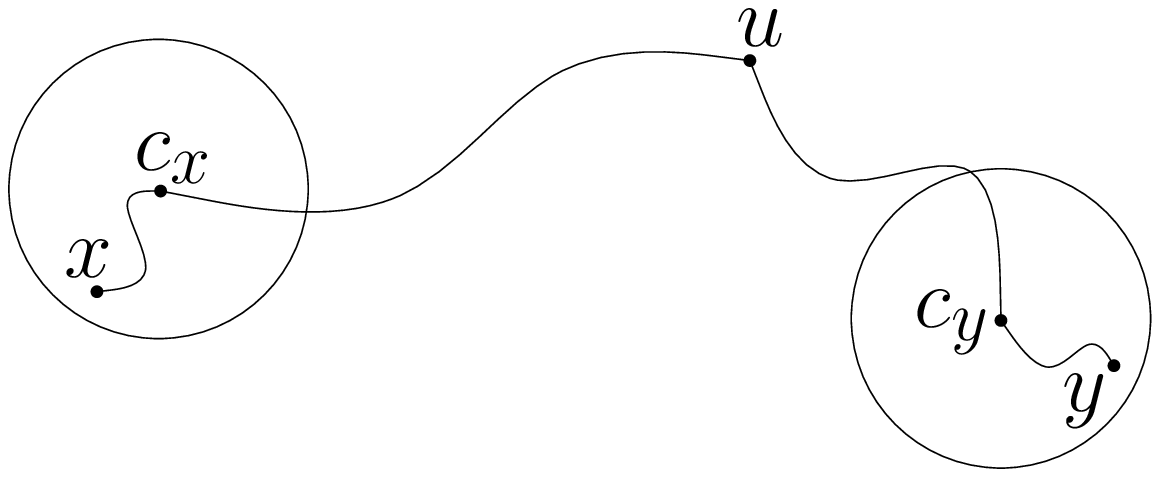}
     \caption{Illustrating the path defined in the proof of Lemma~\ref{lem:4-apx}.}\label{fig:ApproximationBound}
  \end{center}
\end{figure}
\begin{proof}
Consider two vertices $x$ and $y$ in $V$, see Figure~\ref{fig:ApproximationBound}. Let $c_x$ and $c_y$ be centers of the clusters $x$ and $y$ belong to, respectively. Then, we have  $\dist_{G'_B}(x,y) \leq \dist_{G}(x,c_x) + \dist_{G'_B}(c_x,u) +\dist_{G'_B}(u,c_y) +\dist_{G}(c_y,y)$. By Lemma~\ref{le:clustering}, $\dist_{G}(x,c_x),$ $\dist_{G}(c_y,y)\leq D_{opt}^B$. Since $\hbar(\mbox{\spt}(G'_B,C,u))=\hbar(\mbox{\mhspt{B}}(G,C,u))$ and by Lemma~\ref{le:height-optimal}, it holds $\dist_{G'_B}(c_x,u), \dist_{G'_B}(u,c_y)\leq D_{opt}^B$. Hence,  $\dist_{G'_B}(x,y) \leq 4 \cdot D_{opt}^B$.
\end{proof}

\subsection{Constructing a minimum height tree}
In this section, we show an algorithm to compute a \mhspt{B}$(G,C,c_1)$.

We introduce some notation and terminology. Let $C'=C\setminus\{c_1\}$. Observe that a \mhspt{B}$(G,C',c_1)$ is also a \mhspt{B}$(G,C,c_1)$, given that a \mhspt{B}$(G,C',c_1)$ contains $c_1$ as its root. Denote by $d_K^{j}(u,v)$ the minimum weight of a $j$-bounded cost path connecting $u$ and $v$ in $K$. For any $u\in V$, for any $S\subseteq C'$, and for any $0\leq j \leq B$, let $\gamma(u,S,j)$ denote the height of a \mhspt{j}$(G,S,u)$. Hence, the height of a \mhspt{B}$(G,C',c_1)$ is $\gamma(c_1,C',B)$. The following main lemma gives a dynamic programming recurrence for computing $\gamma(c_1,C',B)$.

\begin{lemma} \label{le:main}
For any $u\in V$, any $S\subseteq C'$, and any $0\leq j \leq B$, the following hold: If $|S|=1$, then $\gamma(u,S,j)=d_K^{j}(u,c_i)$ where $S=\{c_i\}$. If $|S|>1$, then $$\gamma(u,S,j)=\min_{\substack{v \in V \\ S' \subsetneq S \\j = j_1+j_2+j_3}} d_K^{j_1}(u,v)+\max\{\gamma(v,S',j_2), \gamma(v,S\setminus S',j_3)\}.$$
\end{lemma}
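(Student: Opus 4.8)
The plan is to prove the recurrence by establishing inequalities in both directions: that the left-hand side is at most the right-hand side, and that it is at least the right-hand side. The base case $|S|=1$ is immediate, since a \mhspt{j}$(G,\{c_i\},u)$ is just a $j$-bounded-cost shortest path from $u$ to $c_i$ in $K$, whose weight is $d_K^{j}(u,c_i)$ by definition; here I would invoke Corollary~\ref{cor:equivalence} to justify that such a path realizes the minimum height. So the work is entirely in the inductive case $|S|>1$.

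For the direction $\gamma(u,S,j) \le \min(\cdots)$, I would fix any choice of $v \in V$, any proper nonempty $S' \subsetneq S$, and any split $j = j_1 + j_2 + j_3$, and construct a $j$-bounded-cost augmentation achieving height at most $d_K^{j_1}(u,v) + \max\{\gamma(v,S',j_2), \gamma(v,S\setminus S',j_3)\}$. Take a $j_1$-bounded-cost shortest $u$-$v$ path $Q$ in $K$ (cost $\le j_1$, weight $d_K^{j_1}(u,v)$), a \mhspt{j_2}$(G,S',v)$ realized inside some $j_2$-bounded augmentation, and a \mhspt{j_3}$(G,S\setminus S',v)$ realized inside some $j_3$-bounded augmentation. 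Forming the union of the non-edges used gives an augmentation with total added cost at most $j_1 + j_2 + j_3 = j$; rooting the combined tree at $u$, hanging $Q$ from $u$ to $v$ and then the two subtrees at $v$, yields a tree spanning $S\cup\{u\}$ whose root-to-$S$ distances are shortest (one must check that concatenating shortest paths in the combined graph does not create shortcuts that would violate the \spt\ condition — this is the routine but slightly delicate bookkeeping step). Its height is exactly $d_K^{j_1}(u,v) + \max\{\hbar\text{ of the two subtrees}\}$, which is at most the claimed bound.

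For the reverse direction $\gamma(u,S,j) \ge \min(\cdots)$, I would take an optimal \mhspt{j}$(G,S,u)$, say realized as $T$ inside a $j$-bounded augmentation $G_j$. Since $|S|>1$ and $T$ is a tree rooted at $u$ spanning $S$, walk down from $u$ to the first vertex $v$ at which the paths to different centers of $S$ branch (equivalently, the lowest common ancestor in $T$ of all vertices of $S$, or the root itself). Let $S'$ and $S\setminus S'$ be the (nonempty, proper) partition of $S$ induced by the two or more subtrees hanging at $v$ — grouping the subtrees into two parts. Let $j_1$ be the cost used on the $u$-$v$ path, $j_2$ the cost used in the part of $T$ below $v$ leading to $S'$, and $j_3$ the remaining cost below $v$; then $j_1 + j_2 + j_3 \le j$. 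The $u$-$v$ portion of $T$ is a $j_1$-bounded-cost $u$-$v$ path in $K$, so has weight $\ge d_K^{j_1}(u,v)$; the two sub-forests, each viewed as an SPT from $v$, have heights $\ge \gamma(v,S',j_2)$ and $\ge \gamma(v,S\setminus S',j_3)$ respectively (here one uses that distances from $v$ in $T$ restricted to the subtree equal distances in $G_j$, which holds because $T$ is an \spt). Summing, $\gamma(u,S,j) = \hbar(T) \ge d_K^{j_1}(u,v) + \max\{\gamma(v,S',j_2),\gamma(v,S\setminus S',j_3)\}$, and the right side is one of the terms in the minimum, so $\gamma(u,S,j)$ is at least that minimum. (If $j_1+j_2+j_3 < j$, monotonicity of $\gamma$ in its last argument lets us pad the split back up to $j$.)

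The main obstacle I expect is the ``no shortcut'' verification in the $\le$ direction: after taking the union of three separately-optimal augmentations, the combined graph might have strictly shorter $v$-to-center paths than any of the pieces, so I must argue that the tree I build — which uses only the edges of the three chosen trees plus the path $Q$ — still satisfies $d_T(u,c) = d_{G_j}(u,c)$ for the \emph{final} augmented graph $G_j$, or else relax Definition~\ref{def:SPT} to only require the height bound (which is all Lemma~\ref{lem:4-apx} actually uses). The cleanest fix is to observe that $\gamma(u,S,j)$ is, by the \mhspt{} definition, a minimum over all $j$-augmentations, so it suffices to exhibit \emph{one} augmentation and \emph{one} spanning tree of the centers whose height is at most the target value, without insisting that this particular tree be a shortest-path tree in the union graph; the recurrence then follows. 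A symmetric remark streamlines the $\ge$ direction, where the subtlety is instead ensuring the branch vertex $v$ exists and that the induced bipartition of $S$ is proper, which holds precisely because $|S| > 1$.
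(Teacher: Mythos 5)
Your proposal is essentially correct, and it attacks the same object (the optimal \mhspt{j}$(G,S,u)$, decomposed at the first branching point) that the paper's proof does, but the organization differs in a way worth noting. The paper proves only one direction explicitly: it takes the optimal tree $T$, distinguishes three cases by the local structure at and just below the root ((a) root has degree $\ge 2$; (b) a path of degree-$2$ vertices leads to a center $u'\in S$; (c) such a path leads to a branching vertex $u'\notin S$), and in each case exhibits a specific $(v,S',j_1,j_2,j_3)$ at which the claimed formula is attained. Your single ``lowest branch vertex of $S$ in $T$'' subsumes those three cases at once, which is cleaner, though you should note that when the LCA $v$ itself lies in $S$ and has only \emph{one} subtree hanging below, there is no bipartition ``induced by two or more subtrees''; you must take $S' = S\setminus\{v\}$ and $S\setminus S' = \{v\}$ with $j_3=0$ and $\gamma(v,\{v\},0)=0$, which is exactly the paper's Case (b). More importantly, you also supply the converse inequality $\gamma(u,S,j)\le \min(\cdots)$ — the ``glue three pieces together'' direction — which the paper leaves entirely implicit. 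Here you correctly spot the real subtlety (the combined tree need not be an SPT in the union augmentation), and your fix is the right one: a tree of height $H$ spanning $v\cup S$ inside some $j$-augmentation $G_j$ gives $d_{G_j}(u,c)\le H$ for all $c\in S$, so the \emph{actual} SPT in $G_j$ has height at most $H$, whence $\gamma(u,S,j)\le H$. No ``relaxation of Definition~\ref{def:SPT}'' is needed — just this one-line comparison to the true SPT.

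One small misstatement to correct: in the $\ge$ direction you justify $\hbar(T_1')\ge\gamma(v,S',j_2)$ by claiming that distances from $v$ inside the subtree equal distances in $G_j$ ``because $T$ is an \spt.'' That is not guaranteed — $T$ being an SPT rooted at $u$ certifies distances from $u$ to centers, not distances from an arbitrary internal vertex $v$. The correct (and simpler) justification is the same kind of argument as in the other direction: $T_1'$ together with its set $F_1$ of non-edges exhibits a $j_2$-augmentation in which every vertex of $S'$ is within distance $\hbar(T_1')$ of $v$, so the minimum height $\gamma(v,S',j_2)$, being a minimum over all $j_2$-augmentations, is at most $\hbar(T_1')$. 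With that fix, and the $\{v\}$-singleton case above, your two-directional argument is complete and in fact slightly more rigorous than what appears in the paper.
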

\begin{proof}
If $|S|=\{c_i\}$, then \mhspt{j}$(G,\{c_i\},u)$ is a minimum-weight path connecting $u$ and $c_i$ and having total cost at most $j$. Hence, $\gamma(u,S,j)=d_K^{j}(u,c_i)$. In particular, notice that, if $u=c_i$, then $\gamma(u,\{u\},j)=d_K^j(u,u)=0$.

If $|S|=m>1$, then suppose that the lemma holds for each $\gamma(u',S',j')$ with $|S'|\leq m-1$ by induction.
Denote by $T$ any \mhspt{j}$(G,S,u)$. Denote by $P(v,w)$ the unique path in $T$ connecting two vertices $v$ and $w$ of $T$.
We distinguish three cases, based on the structure of $T$.
In Case (a), the degree of $u$ in $T$ is at least two (see Figure~\ref{fig:dpr}(a)).
In Case (b), the degree of $u$ in $T$ is one and there exists a vertex $u'\in S$ such that every internal vertex of $P(u,u')$ has degree $2$ in $T$ and does not belong to $S$ (see Figure~\ref{fig:dpr}(b)).
Finally, in Case (c), the degree of $u$ in $T$ is one and there exists a vertex $u'\notin S$ such that every internal vertex of $P(u,u')$ has degree $2$ in $T$ and does not belong to $S$, and such that the degree of $u'$ is greater than two (see Figure~\ref{fig:dpr}(c)).

\begin{figure} [htb]
  \begin{center}
     \includegraphics[width=8cm]{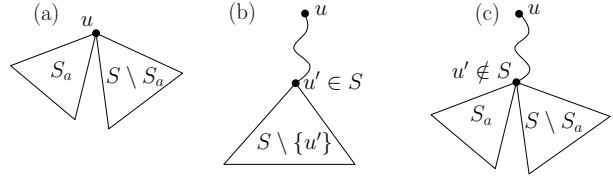}
     \caption{Illustration for the proof of Lemma~\ref{le:main}.} \label{fig:dpr}
  \end{center}
\end{figure}

First, we prove that one of the three cases always applies. If the degree of $u$ in $T$ is at least two, then Case (a) applies. Otherwise, the degree of $u$ is $1$. Traverse $T$ from $u$ until a vertex $v'$ is found such that $v'\in S$ or the degree of $v'$ is at least $3$. If $v'\in S$, then every internal vertex of $P(u,u')$ has degree $2$ in $T$ and does not belong to $S$, hence Case (b) applies. If $v'\notin S$, then the degree of $v'$ is at least $3$, and every internal vertex of $P(u,u')$ has degree $2$ in $T$ and does not belong to $S$, hence Case (c) applies. We now discuss the three cases.

In Case (a), $T$ is composed of two subtrees \mhspt{x}$(G,S_a,u)$ and \mhspt{y}$(G,S\setminus S_a,u)$, only sharing vertex $u$, with $\emptyset \subsetneq S_a \subsetneq S$. The height of $T$ is the maximum of the heights of \mhspt{x}$(G,S_a,u)$ and \mhspt{y}$(G,S\setminus S_a,u)$; also the cost of $T$ is at most $x+y$.
By induction, the heights of \mhspt{x}$(G,S_a,u)$ and \mhspt{y}$(G,S\setminus S_a,u)$ are $\gamma(u,S_a,x)$ and $\gamma(u,S\setminus S_a,y))$, respectively. Thus, the height of $T$ is $\max\{\gamma(v,S_a,x), \gamma(v,S\setminus S_a,y)\}$ and hence $\gamma(u,S,j) = \max\{\gamma(u,S_a,x), \gamma(u,S\setminus S_a,y)\}$. Such a value is found by the recursive definition of $\gamma(u,S,j)$ with $v=u$, $S'=S_a$, $j_1=0$, $j_2=x$, and $j_3=y$.

In Case (b), $T$ is composed of a path from $u$ to $u'$ with cost $x$ and weight $d_K^{x}(u,u')$, and of a \mhspt{y}$(G,S\setminus \{u'\},u')$. The height of $T$ is the sum of $d_K^{x}(u,u')$ and the height of \mhspt{y}$(G,S\setminus \{u'\},u')$; also the cost of $T$ is at most $x+y$.
By induction, the height of \mhspt{y}$(G,S\setminus \{u'\},u')$ is $\gamma(u',S\setminus \{u'\},y)$. Thus, the height of $T$ is $d_K^{x}(u,u') + \gamma(u',S\setminus \{u'\},y)$ and hence $\gamma(u,S,j)= d_K^{x}(u,u') + \gamma(u',S\setminus \{u'\},y)$. Such a value is found by the recursive definition of $\gamma(u,S,j)$ with $v=u'$, $S'=S\setminus \{u'\}$, $j_1=x$, $j_2=y$, and $j_3=0$.

In Case (c), $T$ is composed of a path from $u$ to $u'$ with cost $x$ and weight $d_K^{x}(u,u')$, of a \mhspt{y}$(G,S_a,u')$, and of a \mhspt{z}$(G,S\setminus S_a,u')$ with $\emptyset \subsetneq S_a \subsetneq S$.
The height of $T$ is the sum of $d_K^{x}(u,u')$ and the maximum between the heights of \mhspt{y}$(G,S_a,u')$ and \mhspt{z}$(G,S\setminus S_a,u')$; also the cost of $T$ is at most $x+y+z$.
By induction, the heights of \mhspt{y}$(G,S_a,u')$ and \mhspt{z}$(G,S\setminus S_a,u')$ are $\gamma(u',S_a,y)$ and $\gamma(u',S\setminus S_a,z)$, respectively. Thus, the height of $T$ is $d_K^{x}(u,u') + \max\{\gamma(u',S_a,y),$ $\gamma(u',S\setminus S_a,z)\}$ and hence $\gamma(u,S,j)= d_K^{x}(u,u') + \max\{\gamma(u',S_a,y),\gamma(u',S\setminus S_a,z)\}$. Such a value is found by the recursive definition of $\gamma(u,S,j)$ with $v=u'$, $S'=S_a$, $j_1=x$, $j_2=y$, and $j_3=z$.

This concludes the induction and hence the proof of the lemma.
\end{proof}

\noindent
Lemma~\ref{le:main} yields the following.

\begin{theorem} \label{thm:main}
There exists a $(1,4)$-approximation algorithm for the \bcmd problem with $O((3^B B^3 + n + \log (Bn)) B n^2)$ running time.
\end{theorem}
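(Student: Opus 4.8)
The plan is to chain together the three ingredients developed above and then read off the approximation ratio from Lemma~\ref{lem:4-apx}. The algorithm has three phases. In the \emph{clustering phase} we run the procedure of Lemma~\ref{le:clustering} to obtain, in $O(Bn^2)$ time, a set $C=\{c_1,\dots,c_{B+1}\}$ of $B+1$ cluster centers with $\dist_G(v,C)\le D^B_{opt}$ for every $v\in V$. In the \emph{distance phase} we build the auxiliary digraph $H$ of Section~2 and compute the values $d_K^{j}(u,v)$ for all $u,v\in V$ and all $0\le j\le B$; by Theorem~\ref{th:apsp} this takes $O(Bn^3+Bn^2\log(Bn))$ time. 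In the \emph{tree phase} we put $C'=C\setminus\{c_1\}$ and fill the table $\gamma(u,S,j)$ for all $u\in V$, all $S\subseteq C'$, and all $0\le j\le B$, processing the sets $S$ in nondecreasing order of $|S|$ and applying the base case and the recurrence of Lemma~\ref{le:main}; alongside each entry we store a minimizing tuple $(v,S',j_1,j_2,j_3)$. Once the table is complete, backtracking from the entry $\gamma(c_1,C',B)$ reconstructs a tree $T$ which, by Lemma~\ref{le:main}, is a \mhspt{B}$(G,C',c_1)$, and hence (since it contains $c_1$ as its root) also a \mhspt{B}$(G,C,c_1)$. The algorithm outputs the set $F\subseteq [V]^2\setminus E$ of non-edges of $G$ that appear in $T$.

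For correctness, note first that $\sum_{e\in F}c(e)\le B$: within the tree $T$ each non-edge occurs at most once, and the recurrence of Lemma~\ref{le:main} only ever combines pieces whose cost parameters satisfy $j_1+j_2+j_3\le B$, so the non-edges charged in distinct branches are distinct and their costs add up to at most $B$. Let $G'_B=(V,E\cup F)$ be the corresponding $B$-augmentation of $G$. Then $T$ witnesses $\hbar(\spt(G'_B,C,c_1))=\gamma(c_1,C',B)=\hbar(\mhspt{B}(G,C,c_1))$, so Lemma~\ref{lem:4-apx} applies with $u=c_1$ and gives that the diameter of $G'_B$ is at most $4\cdot D^B_{opt}$. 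Thus $F$ has cost at most $B=1\cdot B$ and produces a graph of diameter at most $4\cdot D^B_{opt}$, i.e., the algorithm is a $(1,4)$-approximation for \bcmd.

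The step I expect to require the most care is the running-time analysis of the tree phase, since this is where the $3^B$ factor must appear. Evaluating one entry $\gamma(u,S,j)$ through the recurrence of Lemma~\ref{le:main} — ranging over $v\in V$, over $S'\subsetneq S$, and over the $O(B^2)$ compositions $j=j_1+j_2+j_3$ — costs $O(2^{|S|}\cdot n\cdot B^2)$ time, so the dominant contribution is obtained by summing $2^{|S|}$ over all subsets $S$ of the $B$-element set $C'$. Using $\sum_{S\subseteq C'}2^{|S|}=\sum_{k=0}^{B}\binom{B}{k}2^{k}=3^{B}$, together with the $O(n)$ choices of $u$ and the $O(B)$ choices of $j$, the tree phase runs in $O(3^{B}B^{3}n^{2})$ time, which is within the bound claimed in the theorem. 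Adding the $O(Bn^2)$ clustering time and the $O(Bn^3+Bn^2\log(Bn))$ time of the distance phase yields a total of $O((3^{B}B^{3}+n+\log(Bn))\,Bn^{2})$. The only subtlety I anticipate is checking that the object returned by backtracking really is a tree of cost at most $B$, so that $F$ is well defined and has the stated cost; this is guaranteed by Lemma~\ref{le:main}, because an optimal \mhspt{B} decomposes into subtrees that are edge-disjoint and meet only at their common root.
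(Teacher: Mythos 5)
Your proposal is correct and follows essentially the same three-phase approach as the paper: clustering via Lemma~\ref{le:clustering}, distance computation via Theorem~\ref{th:apsp}, and dynamic programming over $\gamma(u,S,j)$ via Lemma~\ref{le:main}, with the approximation ratio then read off from Lemma~\ref{lem:4-apx}. (As a minor aside, your per-entry count of $O(2^{|S|}nB^2)$ is actually tighter than the paper's, which uses $j^3$ rather than the correct $O(j^2)$ compositions of $j$ into three parts, but both analyses stay within the theorem's claimed bound.)
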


\begin{proof}
Given an instance $(G,w,c,B)$ of the \bcmd problem, by Theorem~\ref{th:apsp} we can determine, for every pair of vertices $u,v\in V$ and for every $1\leq j\leq B$, the minimum weight of a $j$-bounded cost path connecting $u$ and $v$ in total $O((n + \log (Bn)) B n^2)$ time.
By Lemma~\ref{le:clustering}, a clustering of $G$ can be computed in $O(Bn^2)$ time. Due to Lemma~\ref{le:main}, the problem of computing a \mhspt{B}$(G,C\setminus\{c_1\},c_1)$ can be solved by dynamic programming over the triples $(u,S,j)$ (there are $O\left(B \binom{B+1}{s} n\right)$ such triples with $|S|=s$); the computation of the value for such a triple requires to take a minimum over $j^3 2^{|S|} n$ values, hence the dynamic programming running time is $O\left(n B \sum_{s=0}^{B} \binom{B+1}{s} B^3 2^{s} n\right) = O(B^4 3^B n^2)$. Observe that the dynamic programming can be designed in such a way that a rooted tree with height equal to $\gamma(u,S,j)$ is computed together with the value of $\gamma(u,S,j)$. This is trivially done in the base case; moreover, in the inductive case it only requires, for each $v \in V$, each $S' \subsetneq S$, and each $j = j_1+j_2+j_3$, the computation of a shortest path tree. Finally, by Lemma~\ref{lem:4-apx}, augmenting $G$ with the non-edges that are present in a \mhspt{B}$(G,C\setminus \{c_1\},c_1)$ yields a $B$-augmentation $G_B$ whose diameter is at most $4 \cdot D^B_{opt}$.
\end{proof}


\remove{
We will focus the remaining part of this section on how to compute a \mhspt{B} of $G$, $C'=C\setminus \{c_0\}$ and $c_0$. According to Lemma~\ref{lem:4-apx} such a tree will give us a $4$-approximation.
The dynamic programming solution consists of solving the functional equation $\gamma(c_0,S,j)$, where $c_0 \in C$, $S \subseteq C'$ and $0\leq j\leq B$. So $\gamma(c_0,S,j)$ is an optimal solution for the \mhspt{j}$(G, S, c_0)$. We have:
$$\gamma(c_0,C',B):=\min_{\substack{v \in V \\ S \in C' \\B = B_1+B_2+B_3 }} d_K^{B_1}(c_0,v)+\max\{\gamma(v,S,B_2), \gamma(v,C'\setminus S,B_3)\}$$
and for all vertices $v \in V$ and all sets $S$ such that $|S|=1$ we have
$$\gamma(v,S,j)=d_K(v,c_i,j).$$
Note that specifically it holds that $\gamma(c_i,c_i,j)=0$.
\begin{lemma}
 The running time of the above algorithm is $O(B^3 2^B\cdot n^2)$.
\end{lemma}
\begin{proof}
 The dynamic programming is done over all vertices $v' \in V$, all possible subsets $S \subseteq C$ and all possible values $B_1, B_2$ and $B_3$ such that $B_1+B_2+B_3=B$. Multiplying the number of combinations immediately gives the bound.
\end{proof}
\begin{theorem}
 The above algorithm computes a \mhspt{B}($G, C', c_0$).
\end{theorem}
\begin{proof}
 We will prove the correctness of the dynamic program using induction. The claim (*) that we will show to be true is the following. Given parameters $v, S$ and $j$, the function $\gamma(v,S,j)$ corresponds to a tree $\Gamma(v,S,j)$ which is a \mhspt{j}($G,S,v$).
 {\bf Base case:} If $S=\{c_i\}$, $1\leq i \leq k$, then $\gamma(v,c_i,j)=d_K(v,c_i,j)$, which corresponds to a minimum weight path $\Gamma(v,c_i,j)$ using non-edges of $G$ of at most cost $j$ with one endpoint at $c_{i_1}$ and one at $v$. Thus, for the base case the statement is correct. Specifically note that if $v=c_i$ and $S=\{c_i\}$ then the height of the tree is $\gamma(v,c_i,j)=d_K(c_i,c_i,j)=0$.
 {\bf Induction hypothesis:} Assume the statement is true for every set $S$ with $|S| < m$.
 {\bf Induction step:} Assume we are given the parameters $v, S$ and $j$ with $|S|=m$. Since $m>1$ there exists a node $v'$ in a \mhspt{j} of $G, S$ and $v$ such that either (see Figure~\ref{fig:dp}):
 \begin{enumerate}
  \item[(a)] $v'\in S$, or
  \item[(b)] $v' \notin S$ and $v'$ has degree at least $3$.
 \end{enumerate}
 Note that $v'$ could be any vertex in $S$, including $v$. In case (a) we have $v'=c_i$ for some $c_i \in S$.  A \mhspt{j}($G, S,v$) is then a shortest path from $v$ to $v'$ containing at most $j_1$ non-edges of $G$ together with two minimum height shortest path trees rooted at $v'$; one containing the set $S\setminus\{c_i\}$ and one containing $c_i$, as illustrated in Figure~\ref{fig:dp}(a). Since both of these sets contain less than $m$ vertices we know from the induction hypothesis that $\gamma(v',c_i,j_2)$ and $\gamma(v,S\setminus\{c_i\},j_3)$ correspond to trees $\Gamma(v',c_i,j_2)$ and $\Gamma(v,S\setminus\{c_i\},j_3)$ which are {\sc mh$_{j_2}$spt}($G,c_i,v$) and {\sc mh$_{j_3}$spt}($G,S\setminus\{c_i\},v$), respectively. Testing all values $j_1, j_2$ and $j_3$ such that $j=j_1+j_2+j_3$ proves the claim (*) for $v, S$ and $j$, where $|S|=m$ for case~(a).
 The analysis is similar for case~(b). For this case we know that $v'$ is the root of two minimum height shortest path trees; one containing the set $S_1 \subset S$ and one containing the set $S_2 =S\setminus S_1$, as illustrated in Figure~\ref{fig:dp}(b). Since both of these sets contain less than $m$ vertices we know from the induction hypothesis that $\gamma(v',S_1,j_2)$ and $\gamma(v,S_2,j_3)$ correspond to trees $\Gamma(v',S_1,j_2)$ and $\Gamma(v,S_2,j_3)$ which are {\sc mh$_{j_2}$spt}($G,S_1,v$) and {\sc mh$_{j_3}$spt}($G,S_2,v$), respectively. Testing all values $j_1, j_2$ and $j_3$ such that $j=j_1+j_2+j_3$ proves the claim (*) for $v, S$ and $j$, where $|S|=m$ for case~(b). This also completes the proof of the theorem.
\end{proof}

\begin{figure} [tb]
  \begin{center}
     \includegraphics[width=5cm]{dp}
     \caption{Illustrating the two cases that can occur during the dynamic programming.} \label{fig:dp}
  \end{center}
\end{figure}

We obtain the following:

\begin{theorem} \label{thm:main}
There exists a $(1,4)$-approximation algorithm for the \bcmd problem with $O((2^B B^2 + n + \log (Bn)) B n^2)$ running time.
\end{theorem}

\begin{proof}
\end{proof}
}


\section{Unit Costs and Arbitrary Weights}

For the special case in which each edge has unit cost and arbitrary weight, our techniques lead to several results, that are described in the following. Observe that, in this case we are allowed to insert in $G$ exactly $k$ non-edges of $G$, where $k=B=O(n^2)$. We remark that Theorem~\ref{thm:main} gives a $(1,4)$-approximation algorithm running in
$O((3^k k^3 + n) k n^2)$ time for this special case.

In the following, we denote by $C$ a clustering with $k+1$ clusters constructed as described in Subsection~\ref{ssec:clustering}. We first show a $((k+1)^2,3)$-approximation algorithm.

\begin{theorem}
Given an instance of the \bcmd problem with unit costs, there exists a $((k+1)^2,3)$-approximation algorithm with 
$O(k n^3)$ running time.
\end{theorem}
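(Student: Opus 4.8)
The plan is to combine the clustering of Lemma~\ref{le:clustering} with the bounded‑cost shortest‑path computation of Theorem~\ref{th:apsp}, exploiting the observation that in order to obtain a factor $3$ on the diameter it suffices to make the $k+1$ cluster centers \emph{pairwise} close in the augmented graph — and that, with unit costs, each such connecting path needs at most $k=B$ new edges, so doing this for all pairs is affordable.

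Concretely, I would first run the clustering phase of Subsection~\ref{ssec:clustering} to obtain $C=\{c_1,\dots,c_{k+1}\}$; by Lemma~\ref{le:clustering} this takes $O(kn^2)$ time and guarantees $\dist_G(v,C)\le D^B_{opt}$ for every $v\in V$. Then, applying Theorem~\ref{th:apsp} with $B=k$ (and using that a single‑source shortest‑path run in the auxiliary graph $H$ also yields, via predecessor pointers and the construction in the proof of Lemma~\ref{le:correspondence1}, the corresponding path in $K$), I would compute for every pair $c_i,c_j\in C$ a $k$‑bounded‑cost shortest path $P_{ij}$ in $K$; since the clustering forces $k+1\le n$ we have $\log(kn)=O(\log n)$, so this stays within $O(kn^3)$ time. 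The crucial point is the inequality $\mathrm{weight}(P_{ij})=d_K^{\,k}(c_i,c_j)\le D^B_{opt}$: in an optimal $B$‑augmentation $G_B=(V,E\cup F^\ast)$ a shortest $c_i$--$c_j$ path has weight at most $\mathrm{diam}(G_B)=D^B_{opt}$ and, since costs are unit and $\sum_{e\in F^\ast}c(e)\le B=k$, it traverses at most $k$ non‑edges of $G$, hence it is a $k$‑bounded‑cost path of $K$.

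Next I would set $F:=\bigcup_{1\le i<j\le k+1}\{\text{non-edges of }G\text{ occurring on }P_{ij}\}$ and output $G'=(V,E\cup F)$. Each $P_{ij}$ uses at most $k$ non‑edges, so by unit costs $\sum_{e\in F}c(e)=|F|\le\binom{k+1}{2}k\le(k+1)^2k=(k+1)^2B$, meeting the cost requirement. For the diameter, by construction $\dist_{G'}(c_i,c_j)\le\mathrm{weight}(P_{ij})\le D^B_{opt}$ for all $i,j$; and for arbitrary $x,y\in V$, choosing centers $c_x,c_y$ with $\dist_G(x,c_x),\dist_G(y,c_y)\le D^B_{opt}$ (Lemma~\ref{le:clustering}) and applying the triangle inequality gives $\dist_{G'}(x,y)\le\dist_G(x,c_x)+\dist_{G'}(c_x,c_y)+\dist_G(c_y,y)\le 3D^B_{opt}$. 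The running time is dominated by the call to Theorem~\ref{th:apsp}, while clustering and assembling $F$ from the recovered paths are lower‑order, giving $O(kn^3)$ overall.

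There is no deep obstacle here; the step I would be most careful about is the cost accounting, i.e.\ checking that the $\binom{k+1}{2}$ connecting paths together fit in the budget $(k+1)^2B$ — which works precisely because $\binom{k+1}{2}\le(k+1)^2$ — together with confirming that Theorem~\ref{th:apsp} can be made to return the non‑edges on each path within the stated bound (a routine use of predecessor pointers in $H$, noting that $k+1\le n$ bounds the number of centers and, with $k=B$, keeps all auxiliary quantities polynomial). Note also that the argument never needs to know $D^B_{opt}$: it only influences the analysis, not the algorithm.
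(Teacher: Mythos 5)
Your proposal is correct and follows essentially the same approach as the paper: cluster via Lemma~\ref{le:clustering}, connect every pair of cluster centers by a $k$-bounded-cost shortest path in $K$ (using Theorem~\ref{th:apsp}), and bound the diameter by a triangle inequality through the centers. You in fact spell out a detail the paper leaves implicit, namely why $d_K^{\,k}(c_i,c_j)\le D^B_{opt}$ (via the optimal augmentation's budget and unit costs); the only small slip is the remark that "clustering forces $k+1\le n$," which isn't literally guaranteed, but the needed bound $\log(kn)=O(\log n)$ follows anyway from $k=O(n^2)$, which is what the paper uses.
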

\begin{proof}
For every pair of cluster centers $c_i,c_j \in C$ compute a shortest path in $K$ between $c_i$ and $c_j$ that contains at most $k$ non-edges of $G$. Add those edges to $F$ and let $G'=(V,E \cup F)$. By Theorem~\ref{th:apsp} and since $k=O(n^2)$, $G'$ can be constructed in $O(k n^3)$ time. Observe that, for each pair of cluster centers, the algorithm adds at most $k$ non-edges of $G$ to $F$, thus at most $k(k+1)^2$ non-edges in total. We prove that, for every $v_i,v_j\in V$, there exists a path in $G'$ connecting $v_i$ and $v_j$ whose weight is at most $3\cdot D_{opt}^k$. Denote by $c_i$ and $c_j$ the centers of the clusters $v_i$ and $v_j$ belong to, respectively. We have $\dist_{G'}(v_i,v_j) \leq \dist_{G'}(v_i,c_i) + \dist_{G'}(c_i,c_j) + \dist_{G'}(c_j,v_j)$. By Lemma~\ref{le:clustering}, $\dist_{G'}(v_i,c_i),\dist_{G'}(v_j,c_j)\leq D_{opt}^k$; also, by construction, $\dist_{G'}(c_i,c_j)\leq D_{opt}^k$, and the theorem follows.
\end{proof}

\noindent
Next, we give a $(k,4)$-approximation algorithm.

\begin{theorem}
Given an instance of the \bcmd problem with unit costs, there exists a $(k,4)$-approximation algorithm with 
$O(k n^2)$ running time.
\end{theorem}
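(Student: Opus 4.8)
The plan is to combine the clustering of Subsection~\ref{ssec:clustering} with a \emph{star} of bounded-cost shortest paths, exploiting the same four-term triangle-inequality argument as Lemma~\ref{lem:4-apx} but replacing the expensive dynamic program of Theorem~\ref{thm:main} by a single shortest-path computation. Let $C=\{c_1,\dots,c_{k+1}\}$ be a clustering computed in $O(kn^2)$ time by Lemma~\ref{le:clustering}, and pick $c_1$ as a root. For each $i\in\{2,\dots,k+1\}$ I would compute a $k$-bounded-cost shortest path $P_i$ from $c_1$ to $c_i$ in $K$, let $F$ be the set of non-edges of $G$ lying on $P_2,\dots,P_{k+1}$, and output $G'=(V,E\cup F)$.

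For the cost bound, each $P_i$ uses non-edges of total cost at most $k$, hence at most $k$ of them since all costs are $1$; so $|F|\le k\cdot k=k\cdot B$, which is the claimed factor $k$. For the diameter bound, fix $x,y\in V$ with cluster centers $c_x,c_y$. Since every non-edge of $P_i$ was placed in $F$, the path $P_i$ is present in $G'$, so $\dist_{G'}(c_1,c_i)\le d_K^{k}(c_1,c_i)$ for every $i$. Moreover, in an optimal $k$-augmentation $G_k$ (which is obtained by inserting at most $k$ non-edges into $G$) the shortest $c_1$-$c_i$ path has weight at most $D_{opt}^k$ and uses at most $k$ non-edges, hence it is a $k$-bounded-cost path in $K$, and therefore $d_K^{k}(c_1,c_i)\le D_{opt}^k$. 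Together with $\dist_{G'}(x,c_x)\le\dist_G(x,c_x)\le D_{opt}^k$ and the symmetric inequality at $y$ (Lemma~\ref{le:clustering}), repeated use of the triangle inequality yields $\dist_{G'}(x,y)\le\dist_G(x,c_x)+\dist_{G'}(c_1,c_x)+\dist_{G'}(c_1,c_y)+\dist_G(c_y,y)\le 4\,D_{opt}^k$.

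The step that requires care is meeting the $O(kn^2)$ time bound, since calling Theorem~\ref{th:apsp} verbatim would already cost $O(kn^3)$. Instead I would reuse the directed graph $H$ from the proof of Theorem~\ref{th:apsp} — by Observation~\ref{obs:vertices-edges} it has $O(kn)$ vertices and $O(kn^2)$ arcs — and run a \emph{single-source} shortest-path computation from $(c_1,0)$ with Fibonacci heaps~\cite{ft-fhuin-87}, in $O(kn^2+kn\log(kn))$ time, which is $O(kn^2)$ since $k=O(n^2)$. By Corollary~\ref{cor:equivalence}, for each $i$ the value $d_K^{k}(c_1,c_i)$ and a path realizing it are read off from a shortest path in $H$ from $(c_1,0)$ to the copy $(c_i,\beta)$ that minimizes the distance over $\beta\le k$. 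To recover the non-edges of $P_i$ without re-traversing its (possibly long) in-level segments, I would, while the shortest-path tree is built, store at each vertex of $H$ a pointer to its nearest ancestor that is entered by a level-increasing arc; then the at most $k$ non-edges on each $P_i$ can be listed in $O(k)$ time, for a total of $O(k^2)=O(kn^2)$ over all $k$ paths. Summing the clustering phase, the single Dijkstra call, and the reconstruction gives $O(kn^2)$; the main obstacle is precisely this: avoiding any all-pairs computation and keeping path reconstruction linear in the cost budget rather than in the size of $H$.
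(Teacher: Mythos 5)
Your proposal matches the paper's proof essentially step for step: the same clustering, the same star of $k$-bounded-cost shortest paths out of $c_1$ built by a single-source Dijkstra run on the layered digraph $H$, and the same four-term triangle-inequality bound giving stretch $4$ and cost factor $k$. Your extra care about extracting the non-edges on the paths within the $O(kn^2)$ budget is a sound implementation detail that the paper glosses over, but it does not change the approach.
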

\begin{proof}
Pick an arbitrary cluster center, say $c_1$. For every cluster center $c_j \in C \setminus \{c_1\}$, compute a shortest path between $c_1$ and $c_j$ in $K$  containing at most $k$ non-edges of $G$. Add those edges to $F$ and let $G'=(V,E \cup F)$. By Corollary~\ref{cor:equivalence}, a shortest path between $c_1$ and $c_j$ in $K$ containing at most $k$ non-edges of $G$ corresponds to a shortest path between $(c_1,0)$ and $(c_j,k)$ in digraph $H$. By Observation~\ref{obs:vertices-edges}, $H$ has $O(kn)$ vertices and $O(kn^2)$ edges. Hence, Dijkstra's algorithm with Fibonacci heaps \cite{ft-fhuin-87} computes all the shortest paths between $(c_1,0)$ and $(c_j,k)$, for every $c_j \in C \setminus \{c_1\}$, in total $O(kn^2)$ time. Observe that, for each cluster different from $c_1$, the algorithm adds at most $k$ non-edges of $G$ to $F$, thus at most $k^2$ non-edges in total. We prove that, for every $v_i,v_j\in V$, there exists a path in $G'$ connecting $v_i$ and $v_j$ whose weight is at most $4\cdot D_{opt}^k$. Denote by $c_i$ and $c_j$ the centers of the clusters $v_i$ and $v_j$ belong to, respectively. We have $\dist_{G'}(v_i,v_j) \leq \dist_{G'}(v_i,c_i) + \dist_{G'}(c_i,c_1) + \dist_{G'}(c_1,c_j) + \dist_{G'}(c_j,v_j)$. By Lemma~\ref{le:clustering}, $\dist_{G'}(v_i,c_i),\dist_{G'}(v_j,c_j)\leq D_{opt}^k$; by construction, $\dist_{G'}(c_i,c_1),\dist_{G'}(c_1,c_j)\leq D_{opt}^k$, and the theorem follows.
\end{proof}

\noindent
Finally, we present a $(1,3k+2)$-approximation algorithm.

\begin{theorem}
Given an instance of the \bcmd problem with unit costs, there exists a $(1,3k+2)$-approximation algorithm with $O(n^2 + k^2)$ running time.
\end{theorem}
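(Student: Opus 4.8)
The plan is to use the clustering $C=\{c_1,\dots,c_{k+1}\}$ from Subsection~\ref{ssec:clustering}, so that by Lemma~\ref{le:clustering} every vertex of $V$ lies within $D:=D_{opt}^k$ of some center, and to link the $k+1$ centers by a spanning ``backbone'' that uses at most $k$ non-edges. For every vertex $v$ let $\phi(v)$ be a center with $\dist_G(v,\phi(v))=\dist_G(v,C)\le D$, breaking ties so that $\phi(c)=c$ for every center $c$; this is produced together with the clustering. On the vertex set $C$ I build a complete weighted graph $N$ in which the weight of $\{c_p,c_q\}$ is the minimum, over all pairs $\{a,b\}\in[V]^2$ with $\phi(a)=c_p$ and $\phi(b)=c_q$, of $\dist_G(c_p,a)+w(\{a,b\})+\dist_G(b,c_q)$. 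I then compute a minimum spanning tree $T_N$ of $N$ and let $F$ consist of the non-edges of $G$ that realise the edges of $T_N$ (an edge of $T_N$ realised by an edge of $G$ contributes nothing to $F$). Since $T_N$ has exactly $k$ edges, $|F|\le k=B$, so $G'=(V,E\cup F)$ is a feasible solution.

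The crux is to show that every edge of $T_N$ has weight at most $3D$ in $N$. Fix any partition of $C$ into nonempty parts $A$ and $B$, pick $c_p\in A$ and $c_q\in B$, and consider a shortest path $Q$ from $c_p$ to $c_q$ in an optimal $k$-augmentation $G_k$ of $G$; the weight of $Q$ is at most the diameter $D$. Reading off the nearest-center values along $Q$ from $c_p$ to $c_q$, they start at $\phi(c_p)=c_p\in A$ and end at $\phi(c_q)=c_q\in B$, so some single edge $\{a,b\}$ of $Q$ has $\phi(a)\in A$ and $\phi(b)\in B$. Since $\{a,b\}$ is one edge of $Q$ and all weights are non-negative, $w(\{a,b\})\le D$; together with $\dist_G(a,\phi(a)),\dist_G(b,\phi(b))\le D$, this exhibits an $A$--$B$ edge of $N$ of weight at most $3D$. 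Hence the subgraph of $N$ consisting of the edges of weight at most $3D$ crosses every cut of $C$, so it is connected, and therefore (by the exchange property of minimum spanning trees) $T_N$ uses only edges of weight at most $3D$.

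Assembling the pieces: each edge $\{c_p,c_q\}$ of $T_N$ is realised by a pair $\{a,b\}\in E\cup F$ with $\dist_G(c_p,a)+w(\{a,b\})+\dist_G(b,c_q)\le 3D$, so $\dist_{G'}(c_p,c_q)\le 3D$; since $T_N$ spans the $k+1$ centers, any two centers are joined in $T_N$ by a path of at most $k$ edges, whence $\dist_{G'}(c_p,c_q)\le 3kD$ for all $p,q$. For arbitrary $x,y\in V$, routing through $\phi(x)$ and $\phi(y)$ gives $\dist_{G'}(x,y)\le\dist_G(x,\phi(x))+\dist_{G'}(\phi(x),\phi(y))+\dist_G(\phi(y),y)\le D+3kD+D=(3k+2)\,D_{opt}^k$, so $G'$ is a $(1,3k+2)$-approximation. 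For the running time, $N$ is assembled by scanning the $O(n^2)$ candidate pairs $\{a,b\}$, each handled in $O(1)$ using the precomputed $\phi$ and the distances to $C$, keeping the cheapest connector for each of the $O(k^2)$ center pairs; this costs $O(n^2)$ time and $O(k^2)$ space, and computing $T_N$ and extracting $F$ costs a further $O(k^2)$ time.

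I expect the second paragraph to be the main obstacle: converting the bare existence of a diameter-$D$ augmentation into a cheap ``crossing'' connector for \emph{every} cut of the center set, i.e.\ certifying that the purely combinatorial minimum spanning tree on the centers is forced to consist of weight-$\le 3D$ edges only. The one delicate point is that the nearest-center value changes across a \emph{single} edge of $K$ along the optimal path (so that edge's weight is bounded by the weight of the whole path, hence by $D$), flanked by two detours to nearest centers of cost at most $D$ each; the remainder is bookkeeping.
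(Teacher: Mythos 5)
Your argument is correct, and the overall plan matches the paper's: greedily pick $k+1$ centers $C$ with covering radius $D:=D_{opt}^k$ (Lemma~\ref{le:clustering}), build a weighted graph on $C$, add the $k$ inter-cluster edges realizing a minimum spanning tree of that graph, and bound every distance in $G'$ by detouring to a center, walking the spanning tree, and detouring back. Where you diverge from the paper is in the weight function on the cluster graph and, more importantly, in how the MST edges are certified to be cheap. The paper weights a pair $\{C_i,C_j\}$ by the raw minimum of $w(\{a,b\})$ over $a\in C_i$, $b\in C_j$ and then simply asserts that each MST edge $(x_i,y_i)$ satisfies $w(x_i,y_i)\le D$ (and also writes $\dist_G(v_i,x_1)\le D$ where only $\le 2D$ follows from Lemma~\ref{le:clustering}); neither claim is justified in the text, and the first one in fact needs exactly the cut argument you supply. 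You instead weight $\{c_p,c_q\}$ by $\min\bigl(\dist_G(c_p,a)+w(\{a,b\})+\dist_G(b,c_q)\bigr)$ over $a,b$ with $\phi(a)=c_p$, $\phi(b)=c_q$, and you prove explicitly that for every cut of $C$ the shortest $c_p$--$c_q$ path in the optimal augmentation crosses via a single $K$-edge $\{a,b\}$ with $w(\{a,b\})\le D$ and both detours $\le D$, so the cut/exchange property forces every MST edge to have weight $\le 3D$. Your bookkeeping ($D + 3kD + D$) then cleanly yields $(3k+2)D$, $|F|\le k$, and $O(n^2+k^2)$ time after the clustering step, which is exactly what the theorem claims. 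So the two proofs land in the same place, but yours makes the crucial ``every MST edge is cheap'' step explicit and folds the endpoint detours into the cluster-graph weights, which also repairs the small slip in the paper's stated bound on $\dist_G(v_i,x_1)$.
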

\begin{proof}
For every pair of clusters $C_i$ and $C_j$, with $1\leq i <j \leq k+1$, let $e_{ij}$ be the edge of minimum weight connecting a vertex in $C_i$ with a vertex in $C_j$. We denote by $F'$ the set of these edges. For a subset $F$ of $F'$, we say that $F$ {\em spans} $C$ if the graph representing the adjacencies between clusters via the edges of $F$ is connected. Let $F$ be a minimum-weight set of $k$ edges from $F'$ spanning $C$. Let $G'=(V,E \cup F)$. The set $F'$, and hence the graph $G'$, can be constructed in $O(n^2 + k^2)$ time as follows.
%
%
Consider all the edges of $K$ and keep, for each pair of clusters, the edge with smallest weight. This can be done in $O(n^2)$ time. Finally, compute in $O(k^2)$ time a minimum spanning tree of the resulting graph
\cite{FredmanW94}, that has $O(k)$ vertices and $O(k^2)$ edges. Observe that the algorithm adds at most $k$ non-edges of $G$ to $F$.
We prove that, for every $v_i,v_j\in V$, there exists a path in $G'$ connecting $v_i$ and $v_j$ whose weight is at most $(3k+2) D_{opt}^k$. Denote by $P_C$ the (unique) subset of $F$ connecting the clusters $v_i$ and $v_j$ belong to. Let $(x_1,y_1), (x_2,y_2), \ldots, (x_m,y_m)$ be the edges of $P_C$ in order from $v_i$ to $v_j$. Then, $\dist_{G'}(v_i,v_j) \leq \dist_{G}(v_i,x_1) + w(x_1,y_1) + \dist_{G}(y_1,x_2) + \ldots + w(x_m,y_m) + \dist_{G}(y_m,v_j)$. By Lemma~\ref{le:clustering}, $\dist_{G}(y_i,x_{i+1})\leq 2 D_{opt}^k$, and $\dist_{G}(v_i,x_1), \dist_{G}(y_m,v_j)\leq D_{opt}^k$. Also, $w((x_i,y_i))\leq D_{opt}^k$, and the theorem follows.
\end{proof}



\section{Hardness Results}

\newcommand{\ubcmd}{\textsc{u-bcmd}\xspace}

The main theorem of this section provides a parameterized intractability result for \bcmd{} with unit weights and unit costs, and some related problems.
The \ubcmd problem has as input an unweighted graph $G=(V,E)$ and two integers $k$ and $d$, and the question is whether there is a set $F\subseteq [V]^2 \setminus E$, with $|F|\le k$, such that the graph $(V,E\cup F)$ has diameter at most $d$.
The parameter is $k$.
We will show that \ubcmd is $\W[2]$-hard. We will also provide refinements to the minimum conditions required for intractability, namely \ubcmd remains \NP-complete for graphs of diameter $3$ with target diameter $d=2$. We note that although Dodis and Kanna~\cite{dk-dnbpd-99} provide an inapproximability reduction from \textsc{Set Cover}, they begin with a disconnected graph, and expand the instance with a series of size $2$ sets, which does not preserve the size of the optimal solution, and therefore their reduction cannot be used to show parameterized complexity lower bounds.

%

\begin{theorem}\label{thm:reduction}
\textsc{Set Cover} is polynomial-time reducible to \ubcmd{}. Moreover, the reduction is parameter preserving and creates an instance with diameter $3$ and target diameter $2$.
\end{theorem}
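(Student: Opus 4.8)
The plan is to give a direct gadget reduction. Let $(U,\mathcal{S},k)$ be a \textsc{Set Cover} instance with $U=\{u_1,\dots,u_n\}$ and $\mathcal{S}=\{S_1,\dots,S_m\}$. First I would dispose of trivial cases and normalise the instance: assume every $u_i$ lies in some $S_j$ (otherwise output a trivial no-instance), assume $k<m$ (otherwise output a trivial yes-instance), and --- the one non-cosmetic assumption --- assume every two elements of $U$ occur together in a common set (one should either appeal to the fact that this restriction of \textsc{Set Cover} remains hard, or prepend a small reduction achieving it). Now build the graph $G$: one vertex $s_j$ per set, with the $s_j$ forming a clique; for each element $u_i$, a set $\{e_i^1,\dots,e_i^{2k+1}\}$ of $2k+1$ copies, with $e_i^t$ adjacent to $s_j$ exactly when $u_i\in S_j$; one ``root'' vertex $b$ adjacent to every $s_j$; and one pendant vertex $a$ whose only neighbour is $b$. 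The \ubcmd instance is $(G,k,2)$: budget $k$ on added edges, target diameter $2$. The reduction is clearly polynomial and parameter-preserving ($k'=k$).

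The first thing I would verify is that $G$ has diameter exactly $3$, and more precisely that the pairs of $G$ at distance $3$ are exactly the pairs $(a,e_i^t)$. This is a short case analysis: $a$--$b$ is an edge; $a$--$s_j$ has length $2$ through $b$; $(s_j,s_\ell)$ is an edge and $(s_j,e_i^t)$ has length $\le 2$ (either $u_i\in S_j$, or use a clique-neighbour $s_\ell$ with $u_i\in S_\ell$); $(e_i^t,e_i^{t'})$ and $(e_i^t,e_{i'}^{t'})$ have length $2$ through a common set vertex, the latter using precisely the co-occurrence assumption; $(b,e_i^t)$ has length $2$; and finally $(a,e_i^t)$ has length $3$ (path $a,b,s_j,e_i^t$ with $u_i\in S_j$), with no shorter route since $N_G(a)=\{b\}$ and $b\not\sim e_i^t$. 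Thus augmenting $G$ to diameter $2$ is exactly the problem of ``repairing'' all pairs $(a,e_i^t)$.

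Next I would prove the two directions. If $\{S_{j_1},\dots,S_{j_r}\}$ is a set cover with $r\le k$, take $F=\{\,\{a,s_{j_\ell}\}:1\le\ell\le r\,\}$ (non-edges of $G$); for every $e_i^t$, picking $S_{j_\ell}\ni u_i$ gives the path $a,s_{j_\ell},e_i^t$ of length $2$, and adding edges only shortens distances, so $\mathrm{diam}(G+F)\le 2$. Conversely, let $F$ with $|F|\le k$ satisfy $\mathrm{diam}(G+F)\le 2$. Since $|F|\le k$, at most $2k$ vertices are endpoints of edges of $F$, so for each $i$ some copy $e_i^{t^*}$ is incident to no edge of $F$; hence $N_{G+F}(e_i^{t^*})=N_G(e_i^{t^*})=\{s_j:u_i\in S_j\}$ and $\{a,e_i^{t^*}\}\notin F$. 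Therefore a length-$\le 2$ path from $a$ to $e_i^{t^*}$ must be $a,v,e_i^{t^*}$ with $v\in N_G(e_i^{t^*})$, i.e. $v=s_j$ for some $S_j\ni u_i$, and $\{a,s_j\}\in E(G+F)\setminus E(G)$, so $\{a,s_j\}\in F$. Thus $\mathcal{C}=\{S_j:\{a,s_j\}\in F\}$ covers every $u_i$ with $|\mathcal{C}|\le|F|\le k$. Since $G$ has diameter $3$ and the target is $2$, this equivalence also yields the stated \NP-completeness refinement.

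The part I expect to be the real obstacle is not the gadget arithmetic but isolating the right hypothesis on the source instance. Any vertex adjacent to all copies $e_i^t$ would let a single edge from $a$ repair everything, so no such vertex may exist; but then two element-copies whose elements never share a set are forced to distance $3$ in $G$, creating far too many distance-$3$ pairs to repair within budget. This is exactly why the co-occurrence assumption (equivalently, working with set systems arising from low-diameter graphs) enters, and making this step airtight --- either by a clean hardness statement for the restricted \textsc{Set Cover} or by a more delicate gadget that tolerates arbitrary instances --- is the point that needs the most care; the rest is routine.
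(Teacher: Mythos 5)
Your gadget is close in spirit to the paper's, but leaner in a way that leaves a genuine hole which you yourself flag but do not close. You keep the set-vertices (a clique, adjacent to $b$), the pendant $a$, and $2k+1$ copies of each element so that the budget cannot be spent "locally" on element-copies; this part, and both directions of the equivalence given your extra hypothesis, are fine. The problem is exactly the \emph{co-occurrence assumption}: without it, two element-copies $e_i^t$ and $e_{i'}^{t'}$ for non-co-occurring elements $u_i,u_{i'}$ are already at distance $3$ in $G$, and then a size-$k$ set cover does \emph{not} yield a diameter-$2$ augmentation (edges from $a$ do nothing for those pairs), so the forward direction of your reduction breaks. You propose to patch this either by citing hardness of the restricted \textsc{Set Cover} or by a preprocessing reduction, but neither is routine: adding a universal set (or an all-covering element) to force co-occurrence trivializes the instance, and adding pair-sets $\{u_i,u_{i'}\}$ changes the optimum cover size, so the parameter is no longer preserved. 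This is precisely the difficulty the paper's construction is built to avoid.

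The paper's fix is an extra layer $U$ of $\binom{m}{2}$ vertices $u_{ij}$, one per unordered pair of blocks $T_i,T_j$, with $u_{ij}$ adjacent to all of $T_i\cup T_j$, to $b$, and to all other $U$-vertices. This hands every pair of element-copies a common neighbour \emph{independently of the input set system}, so $G\setminus\{a\}$ has diameter $2$ with no hypothesis on $(X,S,k)$. But $U$ introduces a new way to spend budget (edges $a$--$u_{ij}$ fix two whole blocks at once), and the paper must separately argue this is insufficient: it uses $m=|X|\cdot k$ blocks so that $k$ edges into $U$ touch at most $2k<m$ blocks, leaving some $T_i$ still at distance $3$ from $a$, which forces edges from $a$ into $Y$ and hence a set cover. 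So the $U$-gadget is not a cosmetic addition; it is the piece that makes the reduction work for \emph{arbitrary} \textsc{Set Cover} instances, and it comes with its own counting argument. Your write-up correctly identifies where the difficulty lies, but as it stands the proof does not go through without the additional gadget (or a proof that co-occurrence \textsc{Set Cover} is $\W[2]$-hard, which you do not supply).
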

\begin{proof}
Let $(X,S,k)$ be an instance of \textsc{Set Cover} where $S$ is the base set and $X \subset \mathcal{P}(S)$ is the set from which we must pick the set cover of $S$ with size at most $k$. We construct an instance $(G=(V,E),k,d)$ of \ubcmd{} as follows.

Let $m = \Card{X}\cdot k$. The vertex set $V$ is the disjoint union of $5$ sets:
\begin{itemize}
\item a set $Y$ corresponding to the set $X$ where for each $x \in X$ we have a vertex $y \in Y$,
\item a set $T = \biguplus_{i \in [m]} T_{i}$ corresponding to $S$ where, for each $s \in S$ and $i \in [m]$, we have a vertex $t_{i} \in T_{i}$ (\emph{i.e.,} we have $m$ copies of a set of vertices corresponding to~$S$),
\item a set $U$ with $\binom{m}{2}$ vertices $u_{ij}$, one for each pair of disjoint subsets $T_{i}$, $T_{j}$ of $T$ (where $i\neq j$),
\item the set $\{a\}$, and
\item the set $\{b\}$.
\end{itemize}
The edge set $E$ consists of the following edges:
\begin{itemize}
\item $ab$,
\item $by$ for each vertex $y \in Y$,
\item $bu_{ij}$ for each vertex $u_{ij} \in U$,
\item $yy'$ for each pair of vertices $y,y' \in Y$,
\item $yt_{i}$ for each pair of vertices $y\in Y$ and $t_{i} \in T_{i}$ for each $i \in [m]$ where the corresponding element $s\in S$ is in the corresponding set $x \in X$ in the \textsc{Set Cover} instance,
\item $t_{i}u_{jl}$ for each pair of vertices $t_{i}\in T_{i}$ and $u_{jl} \in U$ such that $i\in \{j,l\}$, and
\item $u_{ij}u_{lp}$ for each pair of vertices $u_{ij}, u_{lp} \in U$.
\end{itemize}

\begin{figure}
\begin{center}
\includegraphics[scale=0.7]{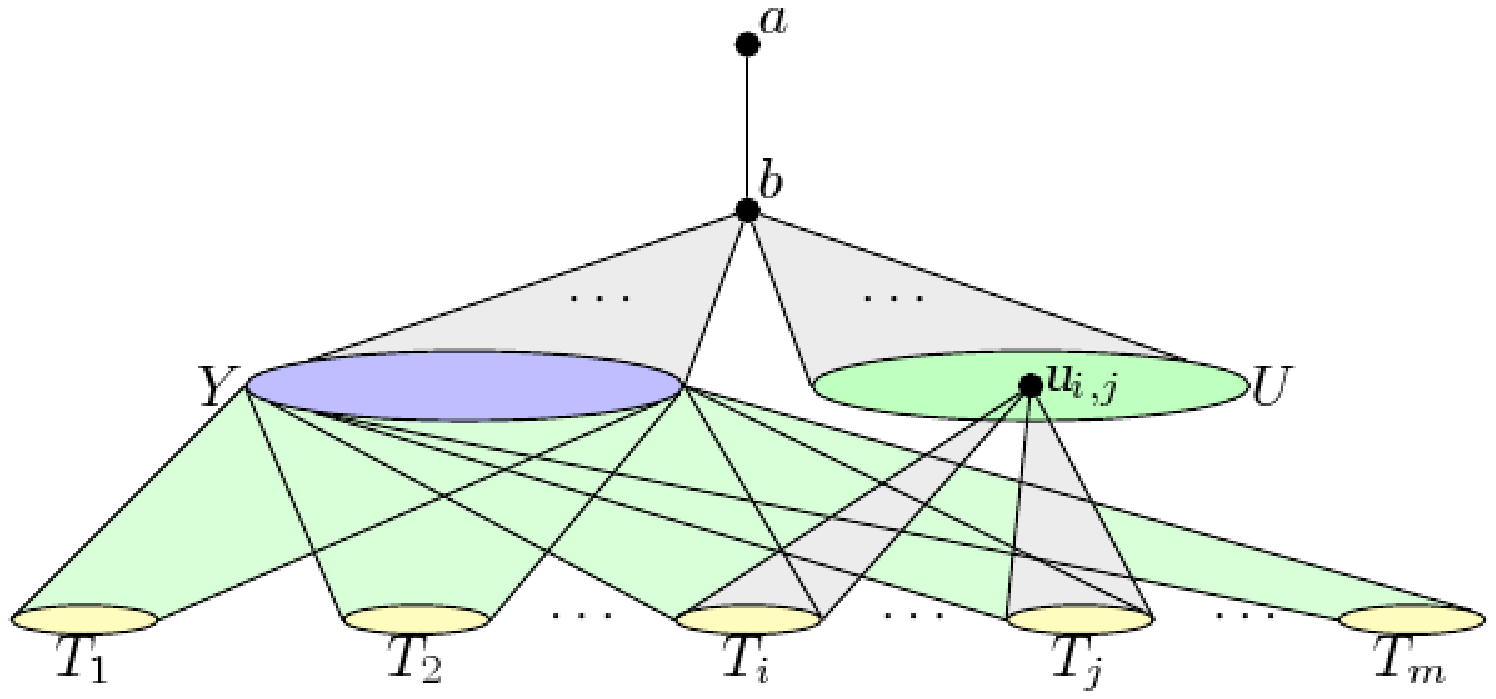}
\end{center}
\caption{Sketch of the construction for the \textsc{Set Cover} to \ubcmd{} reduction. The edge sets represented in gray are complete, the edge sets represented in light green correspond to the set membership from the \textsc{Set Cover} instance. The vertex sets $Y$ and $U$ are cliques. The vertex sets $T_{i}$ are independent sets for all $i \in [m]$.}\label{fig:construction}
\end{figure}

\noindent
We set $d = 2$. Note that $k$ in the \ubcmd{} instance is the same $k$ as for the \textsc{Set Cover} instance.
The construction is sketched in Figure~\ref{fig:construction}.

\begin{claimx}\label{cla:dist2}
For all $v,v' \in V\setminus\{a\}$ we have $\dist(v,v') \leq 2$.
\end{claimx}

\begin{proof}
The vertices of $U$ are at distance one from each other. The vertices of $Y$ are at distance one from each other.

If one of the vertices is $b$, clearly $b$ is at distance $1$ from the vertices of $Y$ and $U$. Therefore the vertices of $U$ and $Y$ are at most distance $2$ from each other via the path through $b$.

Each vertex $t \in T$ is at distance one from some vertex $y \in Y$. As $Y$ is a clique, $t$ is at distance at most two from all the vertices in $Y$.

Each vertex $t \in T$ is at distance one from some vertex $u \in U$. As $U$ is also a clique, $t$ is at distance at most two from all vertices of $U$.

For each pair of vertices $t_{i} \in T_{i}$ and $t_{j} \in T_{j}$ there is a vertex $u_{ij} \in U$ such that $t_{i}u_{ij} \in E$ and $t_{j}u_{ij} \in E$. If $i=j$ then any vertex $u_{ik} \in U$ will suffice. Thus all the vertices of $T$ are at most distance 2 from each other.
\end{proof}

\begin{claimx}\label{cla:dist3}
For all $v \in V$ we have $\dist(a,v) \leq 3$. Moreover, $\dist(a,v) = 3$ if and only if $v \in T$.
\end{claimx}

\begin{proof}
As the distance from $b$ to all other vertices is at most $2$, the distance from $a$ to all other vertices is at most $3$. Moreover, as the distance from $b$ to the vertices of $U$ and $Y$ is one, the distance from $a$ to these vertices is two. Therefore the only vertices at distance three from $a$ are the vertices of $T$.
\end{proof}

\noindent
Thus we are concerned only with reducing the distance between $a$ and the vertices of $T$.

\begin{claimx}\label{cla:corr-red}
$(X,S,k)$ is a \Yes{}-instance of \textsc{Set Cover} if and only if $(G,k,d)$ is a \Yes{}-instance of \ubcmd{}.
\end{claimx}

\begin{proof}
Let $X' \subseteq X$ be the set cover that witnesses that $(X,S,k)$ is a \Yes{}-instance of \textsc{Set Cover}. Let $Y'\subseteq Y$ be the set of vertices that corresponds to $X'$. We have $\Card{Y'} = \Card{X'} \leq k$. If we add the edges $ay$ for all $y \in Y'$, then $a$ is at distance at most $2$ from all vertices $t \in T$. As $X'$ is a set cover of $S$, for each $s\in S$ there is at least one set $x \in X'$ such that $s \in x$. Then there is an edge from $a$ to the vertex $y$ corresponding to $x$, and by the construction, $y$ is adjacent to $t \in T$ if and only if the corresponding element $s$ is in $S$, thus we have a path $a\rightsquigarrow y \rightsquigarrow t$.

Now, assume $(G,k,d)$ is a \Yes{}-instance of \ubcmd{}. First consider the case where all the edges are added between $a$ and the vertices of $Y$. Then the set $Y' \subseteq Y$ of vertices newly adjacent to $a$ corresponds to a set cover $X' \subseteq X$ in the same way as before.

We must demonstrate that we may only (productively) add edges between $a$ and $Y$. Clearly we cannot add the edge $ab$, as it already exists, and clearly adding edges from $b$ to other vertices is not necessary, as we could simply add the edges directly to $a$. Suppose we add edges between $a$ and $T$ directly, as there are $\Card{S}\cdot m$ vertices in $T$, we clearly cannot reduce the distance between $a$ and all the vertices in $T$, so some edges must still be added elsewhere. If we add edges between $a$ and $U$ we can reduce the distance between $a$ and at most $2k$ vertices of $T$. Thus even were we to add such edges, there is at least one $T_{i} \subset T$ still at distance $3$ from $a$. Therefore we must add edges from $a$ to $Y$ such that $T_{i}$ is dominated by this subset of $Y$. Clearly this corresponds to a set cover of $S$.
\end{proof}

\noindent
We note that the reduction is obviously polynomial-time computable, and the parameter $k$ is preserved.
The theorem now follows from the previous claims.
\end{proof}

\begin{corollary}\label{cor:np-complete}
\ubcmd{} is \NP{}-complete even for graphs of diameter three with target diameter two.
\end{corollary}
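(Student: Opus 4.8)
The plan is to derive the corollary directly from Theorem~\ref{thm:reduction} together with the standard fact that \textsc{Set Cover} is \NP-hard. First I would check membership in \NP: given an instance $(G,k,d)$ of \ubcmd and a candidate set $F\subseteq [V]^2\setminus E$ with $|F|\le k$, one computes all pairwise distances in the graph $(V,E\cup F)$ in polynomial time (for instance, a breadth-first search from every vertex) and verifies that the maximum distance is at most $d$. Thus $F$ is a polynomial-size certificate checkable in polynomial time, so \ubcmd $\in \NP$; this argument is unaffected by restricting attention to inputs whose graph has diameter three and whose target diameter is two, so the restricted problem lies in \NP as well.

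For hardness I would simply invoke Theorem~\ref{thm:reduction}: on input $(X,S,k)$ it produces, in polynomial time, an instance $(G,k,d)$ of \ubcmd with $d=2$ in which $G$ has diameter exactly three (Claims~\ref{cla:dist2} and~\ref{cla:dist3} pin the diameter down: every pair of vertices outside $\{a\}$ is at distance at most two, $a$ is at distance at most three from everything, and the vertices of $T$ are at distance exactly three from $a$), and such that $(G,k,d)$ is a \Yes-instance of \ubcmd iff $(X,S,k)$ is a \Yes-instance of \textsc{Set Cover} (Claim~\ref{cla:corr-red}). Since \textsc{Set Cover} is \NP-hard, this exhibits an \NP-hardness reduction whose image consists entirely of \ubcmd instances of the stated special form; combining with membership in \NP gives the corollary.

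The only point that needs a word of care is that the phrase ``graphs of diameter three'' be literally correct, i.e.\ that the reduction never outputs a graph of diameter strictly less than three. This is guaranteed by Claim~\ref{cla:dist3} as soon as $T\neq\emptyset$, which holds whenever the underlying \textsc{Set Cover} instance has $S\neq\emptyset$ (and $|X|,k\ge 1$) — degenerate instances with $S=\emptyset$ are trivial \Yes-instances and can be mapped to any fixed \Yes-instance of the required form, so they do not affect the reduction. Beyond this bookkeeping I do not anticipate any obstacle: the corollary is essentially a repackaging of Theorem~\ref{thm:reduction} plus the routine observation that \ubcmd is in \NP.
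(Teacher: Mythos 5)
Your proposal is correct and matches the paper's approach: the paper likewise derives the corollary by combining Theorem~\ref{thm:reduction} with membership in \NP, the only difference being that the paper cites~\cite{dk-dnbpd-99} for \NP-membership whereas you supply the (routine) verification argument directly. Your extra care about the reduction genuinely producing diameter-three instances is sound bookkeeping but not something the paper spells out.
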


\begin{proof}
As it is already known that \ubcmd{} is in \NP{}~\cite{dk-dnbpd-99}, the result for \ubcmd{} follows from Theorem~\ref{thm:reduction}.
\end{proof}

\noindent
As \textsc{Set Cover} is $\W[2]$-hard with parameter $k$, combined with Corollary~\ref{cor:np-complete} we also have the following result.

\begin{corollary}\label{cor:w[2]-hard-1}
\ubcmd{} is $\W[2]$-hard even for graphs of diameter three with target diameter two.
\end{corollary}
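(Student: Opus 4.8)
The plan is to simply chain together two facts. First, it is a standard result in parameterized complexity that \textsc{Set Cover}, parameterized by the size $k$ of the cover sought, is $\W[2]$-hard (see, e.g., \cite{DowneyF99,FlumG06}); in fact it is $\W[2]$-complete. Second, Theorem~\ref{thm:reduction} supplies a polynomial-time reduction from \textsc{Set Cover} to \ubcmd{} that maps an instance $(X,S,k)$ to an instance $(G,k,d)$ with the \emph{same} parameter value $k$ and with $G$ of diameter $3$ and $d=2$. A reduction that is computable in polynomial time and leaves the parameter unchanged is, a fortiori, an fpt-reduction (the new parameter $k'=k$ is trivially bounded by a computable function of $k$, and polynomial time is in particular fpt time). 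Hardness for $\W[2]$ is preserved under fpt-reductions, so \ubcmd{} is $\W[2]$-hard; and since every instance produced by the reduction has diameter $3$ and target diameter $2$, the hardness already holds on this restricted class of instances.

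I would write this as one short paragraph: invoke the $\W[2]$-hardness of \textsc{Set Cover} with parameter $k$, note that Theorem~\ref{thm:reduction} is parameter-preserving and polynomial-time (hence an fpt-reduction) and outputs diameter-$3$, target-diameter-$2$ instances, and conclude. There is no real obstacle here; the only thing worth making explicit is that "parameter preserving + polynomial time" implies "fpt-reduction", which is what allows the $\W[2]$-hardness to transfer. One could also phrase the corollary as following from the combination of Corollary~\ref{cor:np-complete} (which already isolates the restricted instance class) with the parameter-preserving property of the reduction, but invoking the fpt-reduction machinery directly is cleanest.
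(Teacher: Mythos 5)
Your proposal is correct and takes essentially the same route as the paper, which derives the corollary directly from the $\W[2]$-hardness of \textsc{Set Cover} together with the polynomial-time, parameter-preserving reduction of Theorem~\ref{thm:reduction} (the paper phrases it via Corollary~\ref{cor:np-complete}, but the substance is identical). The only difference is that you make explicit the observation that a polynomial-time, parameter-preserving reduction is an fpt-reduction, which the paper leaves implicit.
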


%
%

\noindent
We note additionally that as the initial graph has diameter $3$ and the target diameter is $2$, it is even \NP{}-hard and \W[2]-hard to decide if there is a set of $k$ new edges that improves the diameter by one. Furthermore by taking $a$ as source vertex, the results transfer immediately to the single-source version as discussed by Demaine \& Zadimoghaddam~\cite{dz-mdnus-10}.

The construction of Theorem~\ref{thm:reduction} can even be extended to give a parameterized inapproximability result for \ubcmd{}.

\begin{theorem}\label{thm:param_inapprox}
It is $\W[2]$-hard to compute a $(1+\frac{c}{k}, \frac{3}{2}-\varepsilon)$-approximation for \ubcmd{} for any constants $c$ and $\varepsilon > 0$.
\end{theorem}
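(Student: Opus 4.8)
The plan is to run the assumed approximation algorithm on carefully built instances on which it is forced to solve \textsc{Set Cover}, which is $\W[2]$-hard with parameter~$k$. The basic observation is that a $(1+\frac{c}{B},\frac32-\varepsilon)$-approximation, given a budget~$B$, returns a set $F$ of at most $(1+\frac{c}{B})B=B+c$ non-edges with $\mathrm{diam}(G+F)\le(\frac32-\varepsilon)\,D^{B}_{opt}$; in particular, whenever $D^{B}_{opt}=2$ this diameter is at most $3-2\varepsilon<3$ and hence, being a positive integer, equal to~$2$. So it suffices to build, from a \textsc{Set Cover} instance $(X,S,k)$, a \ubcmd{} graph $G$ and a budget $B$ with $B$ linear in $k$ and the following gap property: $D^{B}_{opt}=2$ when $(X,S,k)$ is a \Yes{}-instance, whereas no augmentation of $G$ by at most $B+c$ edges has diameter $2$ when it is a \No{}-instance. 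Running the approximation on $G$ with budget $B$ and answering \Yes{} iff the returned graph has diameter at most $2$ then decides \textsc{Set Cover} in time \FPT{} in $k$, so $\W[2]=\FPT$.

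To create the gap I would amplify before applying the construction of Theorem~\ref{thm:reduction}. Replace $(X,S,k)$ by $\ell:=c+1$ disjoint copies of itself: $X'=X^{(1)}\uplus\dots\uplus X^{(\ell)}$ with each $X^{(i)}$ a disjoint copy of $X$, $S'=S^{(1)}\uplus\dots\uplus S^{(\ell)}$, and parameter $\ell k$. Since the copies are disjoint, the minimum cover size $\mathrm{opt}(X',S')$ equals $\ell$ times $\mathrm{opt}(X,S)$; thus $\mathrm{opt}(X',S')\le\ell k$ if $(X,S,k)$ is a \Yes{}-instance and $\mathrm{opt}(X',S')\ge\ell(k+1)=\ell k+c+1$ otherwise, so $\mathrm{opt}(X',S')$ never lies in $\{\ell k+1,\dots,\ell k+c\}$. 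Now apply the construction of Theorem~\ref{thm:reduction} to $(X',S',\ell k)$, obtaining $G$, and set $B:=\ell k$. As in Claim~\ref{cla:dist3}, $G$ has diameter $3$, the only distance-$3$ pairs are those between the special vertex $a$ and the element vertices, and augmenting $G$ never raises the diameter above $3$.

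The correspondence follows from the proof of Claim~\ref{cla:corr-red}. Its ``$\Leftarrow$'' direction is valid for an arbitrary budget: from a cover of $(X',S')$ of size $B_0$ one adds the $B_0$ edges $\{a,y\}$ for the corresponding set vertices and obtains diameter $2$; with $B_0=\ell k$ in the \Yes{}-case this gives $D^{B}_{opt}=2$. For the other direction I would strengthen the ``$\Rightarrow$'' argument of Claim~\ref{cla:corr-red} to: if $F$ is \emph{any} set of non-edges with $\mathrm{diam}(G+F)\le 2$ and $3\Card{F}$ is smaller than the number $m=\Card{X'}\cdot\ell k$ of element-block copies created by the construction, then $\Card{F}\ge\mathrm{opt}(X',S')$. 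The reason is that, since $\mathrm{diam}_G(a,t)=3$ for every element vertex $t$, a length-$\le 2$ path from $a$ to $t$ either has the \emph{good} form $a$-$y$-$t$ for a set vertex $y$ with $\{a,y\}\in F$ and $\{y,t\}\in E$, or uses some other edge of $F$ (an edge to $b$, to a $U$-vertex, to an element vertex, or a \emph{dirty} edge between such vertices); an edge of $F$ from $a$ to a $U$-vertex fully handles two element blocks, and every other non-good edge of $F$ accounts for element vertices lying in at most one block, so fewer than $m$ blocks contain a vertex served by a non-good path. As all blocks are identical copies of $S'$, there is a block all of whose element vertices reach $a$ only through good paths, hence the sets corresponding to the relevant $\{a,y\}$-edges of $F$ cover $S'$, giving $\Card{F}\ge\mathrm{opt}(X',S')$. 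Checking this carefully -- ruling out all the indirect two-hop shortcuts the new edges could create -- is the step that needs the most work, but it is made routine by the large slack, since $m=\ell^2\Card{X}k$ dwarfs $\ell k+c$ for all but tiny instances (tiny \textsc{Set Cover} instances, and the case $c=0$, being handled directly via Theorem~\ref{thm:reduction}).

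Putting the pieces together, I would run the assumed $(1+\frac{c}{k},\frac32-\varepsilon)$-approximation on $G$ with budget $B=\ell k$, obtaining $F$ with $\Card{F}\le\ell k+c$ and $\mathrm{diam}(G+F)\le(\frac32-\varepsilon)\,D^{B}_{opt}$. If $(X,S,k)$ is a \Yes{}-instance, then $D^{B}_{opt}=2$, so $\mathrm{diam}(G+F)\le 3-2\varepsilon<3$, i.e.\ $\mathrm{diam}(G+F)=2$. If $(X,S,k)$ is a \No{}-instance, then $\mathrm{opt}(X',S')\ge\ell k+c+1>\Card{F}$ while $3\Card{F}\le 3(\ell k+c)<\ell^2\Card{X}k=m$, so by the strengthened claim $\mathrm{diam}(G+F)>2$, and since augmenting never exceeds diameter $3$ we get $\mathrm{diam}(G+F)=3$. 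Hence the test ``$\mathrm{diam}(G+F)\le 2$?'' decides \textsc{Set Cover} with parameter $k$ in \FPT{} time, which is impossible unless $\W[2]=\FPT$. This establishes the claimed $\W[2]$-hardness of computing a $(1+\frac{c}{k},\frac32-\varepsilon)$-approximation for \ubcmd{}.
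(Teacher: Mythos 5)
Your proposal is correct and follows the same high-level strategy as the paper: amplify the hard instance to create a gap, then exploit the integrality of the diameter (a ratio strictly below $3/2$ on a yes-instance with $D^{B}_{opt}=2$ forces diameter $2$). The implementation of the amplification differs, though. The paper duplicates the $Y$ and $T$ components of the reduction graph $c+1$ times inside a single gadget (sharing $a$, $b$, and $U$), and then argues by pigeonhole that one copy of $Y$ receives few new edges from $a$. You instead take the disjoint union of $c+1$ copies of the \textsc{Set Cover} instance itself and then apply the Theorem~\ref{thm:reduction} construction once with budget $\ell k$. Your variant is arguably a bit cleaner: the crucial gap $\mathrm{opt}(X',S')\notin\{\ell k+1,\dots,\ell k+c\}$ is immediate from disjointness of the copies, whereas the paper relies on a slightly terse pigeonhole over its component copies. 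You also correctly identify, and defer via a slack argument, the one step the paper likewise treats lightly: ruling out that edges of $F$ not of the form $\{a,y\}$ with $y\in Y$ could collectively bring $a$ within distance $2$ of all of $T$. This is exactly the role of the large multiplicity of $T$-blocks, the same device already used in the proof of Claim~\ref{cla:corr-red}; the counting constant you use ($3|F|$ rather than roughly $2|F|$) is immaterial given the $\Theta(\ell^{2}|X|k)$ slack. In short: same idea, a genuinely different (and somewhat tidier) realization of the amplification, with the same residual hand-waving as the original on the non-$\{a,Y\}$ edges.
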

\begin{proof}
We repeat the construction of Theorem~\ref{thm:reduction}, except that we introduce $c+1$ copies of the $Y$ and $T$ components and set $k' = k\cdot(c+1)$. Let $Y_{i}$ with $1\leq i\leq c+1$ be the copies of the $Y$ components and $T_{i,j}$ with $1\leq i \leq c+1$ and $1 \leq j \leq m$ be the copies of the $T$ components. The edges are similar to the previous construction; we highlight the differing edges:
\begin{itemize}
\item $yy'$ for each $y,y' \in \bigcup_{i}Y_{i}$,
\item $y_{i}t_{i,j}$ for each $y_{i} \in Y_{i}$ and $t_{i,j} \in T_{i,j}$ where the corresponding element $x\in X$ is in the corresponding set $S$,
\item $by$ for all $y \in \bigcup_{i}Y_{i}$, and
\item $t_{a,i}u_{i,j}$ for each vertex $t_{a,i} \in \bigcup_{a}T_{a,i}$ and each vertex $u_{i,j}\in U$.
\end{itemize}

\noindent
Then apart from $a$, all vertices remain at pairwise distance $2$, with $a$ at distance $3$ from vertices in $\bigcup_{i}Y_{i}$. Clearly to reduce the diameter to $2$ we require the addition of edges from $a$ to vertices of the $Y$ component copies as before, furthermore we require edges to each copy, otherwise there is some $T_{i,j}$ component that remains at distance $3$ from $a$. Thus if the \textsc{Set Cover} instance has a solution of size $k$, the \ubcmd{} instance has a solution of size $k'$.

Let $F$ be a set of $(1+\frac{c}{k'})k' = k'+c$ edges such that the diameter of $G'=(V,E \cup F)$ is at most $(\frac{3}{2}-\varepsilon)\cdot 2$. Since the diameter of $G'$ is integral, it has diameter at most 2. Since there are
$c+1$ copies of $Y$, at least one of them, $Y_i$, has at most $k'$ vertices adjacent to $a$, giving a size $k'$ set cover as before.
\end{proof}

\subsection*{Acknowledgments}
SG acknowledges support from the Australian Research Council
(grant DE120101761).
JG is funded by the Australian Research Council FT100100755.
NICTA is funded by the Australian
Government as represented by the Department of Broadband,
Communications and the Digital Economy and the Australian Research
Council through the ICT Centre of Excellence program.

\bibliographystyle{plain}
\bibliography{references}
\end{document}